\def\eps{{\epsilon}}
\def\mI{{\bm{I}}}
\def\mW{{\bm{W}}}
\def\gC{{\mathcal{C}}}
\def\gD{{\mathcal{D}}}
\def\gO{{\mathcal{O}}}
\def\gX{{\mathcal{X}}}
\def\gY{{\mathcal{Y}}}
\def\vzero{{\bm{0}}}
\def\vtheta{{\bm{\theta}}}
\def\vx{{\bm{x}}}
\def\vz{{\bm{z}}}
\def\sR{{\mathbb{R}}}
\def\rvb{{\mathbf{b}}}
\def\ervb{{\textnormal{b}}}
\newtheorem{definition}{Definition}
\newtheorem{thm}{Theorem}
\newtheorem{remark}{Remark}
\title{Towards Practical Differential Privacy in Data Analysis: Understanding the Effect of Epsilon on Utility in Private ERM}
\date{} 					
\author{ Yuzhe Li \\
	Institute of Information Engineering\\
	Chinese Academy of Sciences\\
	\texttt{liyuzhe@iie.ac.cn} \\
	\And
	Yong Liu\thanks{Corresponding author.} \\
	Gaoling School of Artificial Intelligence\\
	Renmin University of China, Beijing, China\\
	\texttt{liuyonggsai@ruc.edu.cn} \\
	\And
	Bo Li \\
	Institute of Information Engineering\\
	Chinese Academy of Sciences\\
	\texttt{libo@iie.ac.cn} \\
	\And
	Weiping Wang \\
	Institute of Information Engineering\\
	Chinese Academy of Sciences\\
	\texttt{wangweiping@iie.ac.cn} \\
	\And
	Nan Liu \\
	Institute of Information Engineering\\
	Chinese Academy of Sciences\\
	\texttt{liunan@iie.ac.cn} \\
}
\begin{document}
\maketitle

\begin{abstract}
As computation over sensitive data has been an important goal in recent years, privacy-preserving data analysis has gradually attracted more and more attention. Among various mechanisms, differential privacy has been widely studied due to its formal privacy guarantees for data analysis. As one of the most important issues, the crucial trade-off between the strength of privacy guarantee and the effect of analysis accuracy is highly concerned among the researchers. Existing theories for this issue consider that the analyst should first chooses a privacy requirement and then attempts to maximize the utility. However, as differential privacy is gradually deployed in practice, a gap between theory and practice comes out: in practice, product requirements often impose hard accuracy constraints, and privacy (while desirable) may not be the over-riding concern. Thus, it is usually that the requirement of privacy guarantee is adjusted according to the utility expectation, not the other way around. This gap raises the question of how to provide maximum privacy guarantee for data analysis due to a given accuracy requirement. In this paper, we focus our attention on private Empirical Risk Minimization (ERM), which is one of the most commonly used data analysis method. We take the first step towards solving the above problem by theoretically exploring the effect of $\eps$ (the parameter of differential privacy that determines the strength of privacy guarantee) on utility of the learning model. We trace the change of utility with modification of $\eps$ and reveal an established relationship between $\eps$ and utility. We then formalize this relationship and propose a practical approach for estimating the utility under an arbitrary value of $\eps$. Both theoretical analysis and experimental results demonstrate high estimation accuracy and broad applicability of our approach in practical applications. As providing algorithms with strong utility guarantees that also give privacy when possible becomes more and more accepted, our approach would have high practical value and may be likely to be adopted by companies and organizations that would like to preserve privacy but are unwilling to compromise on utility.
\end{abstract}

\keywords{differential privacy \and machine learning \and parameter selection}

\section{Introduction}
In recent years, more and more researches have exposed potential privacy risks in data analysis tasks~\cite{fredrikson2015model,tramer2016stealing,shokri2017membership,ganju2018property,yeom2018privacy,Luca2019Exploiting}.  Therefore, with more and more sensitive data is involved in data analysis tasks, the privacy concerns are becoming more and more serious. To solve this problem, various mechanisms have been proposed to preserve privacy leakage during data analysis, e.g., \emph{k}-anonymity and its variants~\cite{sweeney2002k,hay2008resisting,liu2008towards,li2012on}. However, due to the limitations of the schemes (e.g., they are syntactic properties based) and the rich amount of auxiliary information available to adversaries, these mechanisms are susceptible to emerging \emph{structure based de-anonymization attacks}~\cite{backstrom2011wherefore,narayanan2009de,srivatsa2012deanonymizing}. One definition that overcome the limitations of existing mechanisms by protecting against any adversaries with any potential auxiliary information is differential privacy.

Differential privacy~\cite{dwork2006calibrating} provides a generic solution for privacy-preserving data analysis. It proves formal guarantees, in terms of a privacy budget $\eps$, on how much information is leaked. However, since approaches for achieving differential privacy necessarily lead to a decreased accuracy, it is necessary for differentially private data analysis approaches to simultaneously consider privacy guarantees and utility guarantees. Privacy alone can be easily achieved by injecting adequate noise, but this usually leads to useless analysis results. Mathematically, the trade-off between privacy and utility is tunable by $\eps$, that is, smaller $\eps$ leads to stronger privacy guarantee with lower utility, and larger $\eps$ leads to weaker privacy guarantee with higher utility. Thus, the choice of $\eps$ really matters for the implementation of differential privacy in data mining.

Traditional approaches to differential privacy assume that the user can pick the privacy parameter $\eps$ via some exogenous process, and attempt to maximize the utility of the computation subject to the privacy constraint~\cite{chaudhuri2011differentially,kifer2012private,bassily2014private,wang2017differentially,iyengar2019towards}. 
But these privacy-first approaches often encounter problems in practical applications. The first problem would be that they place an inordinate burden on the data analyst to understand differential privacy. These approaches imply a precondition that the user is capable to perceive and judge the effects of a privacy parameter. However, as the definition of differential privacy offers little insight into how this should be done, it may be beyond the capabilities of users. The second problem is that these approaches do not provide any guarantees to the users on the quality they really care about, namely the utility. Due to the lack of available theories, it is difficult for users to know in advance what impact the choice of $\eps$ will bring to the utility. 
The only theoretically sound method, namely ``utility theorem''~\cite{chaudhuri2011differentially}, often fails to provide meaningful guidance as they tend to be worst-case bounds and would be too conservative as a practical reference~\cite{Ligett2017accuracy}.
The inability to assess the impact of utility increases the uncertainty of introducing privacy guarantee, thus may introduce unpredictable risks and costs. For example, low $\eps$ would lead to low accuracy, and low accuracy may lead to frequent errors in face recognition system, which leads to the introduction of human assistance.


To solve the problem of traditional methods,~\cite{Ligett2017accuracy} considers a different perspective in the domain of private empirical risk minimization (ERM), where the goal is to give the strongest privacy guarantee possible, subject to a constraint on the acceptable accuracy. They propose a framework for developing differentially private algorithms under accuracy constraints, which allows one to choose a given level of accuracy first, and then finding the private algorithm
meeting this accuracy. However, an obvious drawback of this method is that it requires so many attempts of $\eps$ (noise subtraction), that would take a lot of computation and training time. This is often difficult to be accepted in practice, especially in large-scale learning tasks.

In this paper, we continue the exploration of solving traditional problems from accuracy-first perspective in private ERM. We provide a novel practical method for assessing the impact of $\eps$ on utility of differentially private learning algorithms due to the inherent formulaic relations between them, which would be both as accurate as empirical approach and as efficient as theoretical methods. 
The main contributions of this paper can be summarized as follows:
\begin{itemize}
\item We comprehensively analyze the effect of $\eps$ on utility in objective-perturbation private ERM algorithms. We are the first to find out that there exists a data-dependent relationship between the value of $\eps$ and the utility of final trained learning model.
\item We are the first to formalize this relationship and propose a practical approximate approach for utility analysis. We also show that our approximate approach could be easily used in practice applications.
\item We conduct a series of evaluations over three real-world datasets. Experimental results show that the estimation results of our approach is considerable close to the actual measured results, which demonstrates high estimation accuracy and broad applicability of our approach in practical applications.
\end{itemize}

\section{Preliminary}
In the rest of this paper, we use $\|\cdot\|$ to denote the $L_2$ norm. All vectors will typically be written in boldface. We will use $\mI$ to denote the identity matrix. Before delving into the details of our approach, we will first recall some basic concepts and expressions of differential privacy and differentially private learning. 
\subsection{Differential Privacy}
Differential privacy is a rigorous mathematical framework for privacy guarantee. It has recently received a significant amount of research attention for its robustness to known attacks. The typical definition of differential privacy is given in~\cite{dwork2006calibrating}, as follows:

\begin{definition}[$(\eps,\delta)$-differential privacy]\label{def-dp}
Given a randomized function $\mathcal{M} : \mathcal{D} \to \mathcal{R}$ with domain $\mathcal{D}$ and range $\mathcal{R}$, we say $\mathcal{M}$ is $(\eps,\delta)$-differential privacy if for all pairs of neighboring inputs $D,D'$ differing by one record, and for any subset of outputs $S \subseteq \mathcal{R}$, we have:  
	\[\Pr[\mathcal{M}(D) \in S]\leq e^{\eps}\times \Pr[\mathcal{M}(D') \in S]+\delta.\]
where $\eps$ is the privacy budget and $\delta$ is the failure probability.
\label{df-dp}
\end{definition}

When $\delta = 0$ we achieve a strictly stronger notion of $\eps$-differential privacy. 
A common way to achieve differential privacy is to add some randomized noise to the output, where the noise is proportional to sensitivity of function  $\mathcal{M}$:

\begin{definition}[Sensitivity]
The sensitivity of function $\mathcal{M}$ is the maximum change in the output of $\mathcal{M}$ for all possible pairs of neighboring inputs $D,D'$:
\[\Delta = \max_{D, D'} \|\mathcal{M}(D) - \mathcal{M}(D')\|.\]
\end{definition}

\subsection{Differentially Private ERM}
\begin{figure*}
\centering
\includegraphics[scale=0.7]{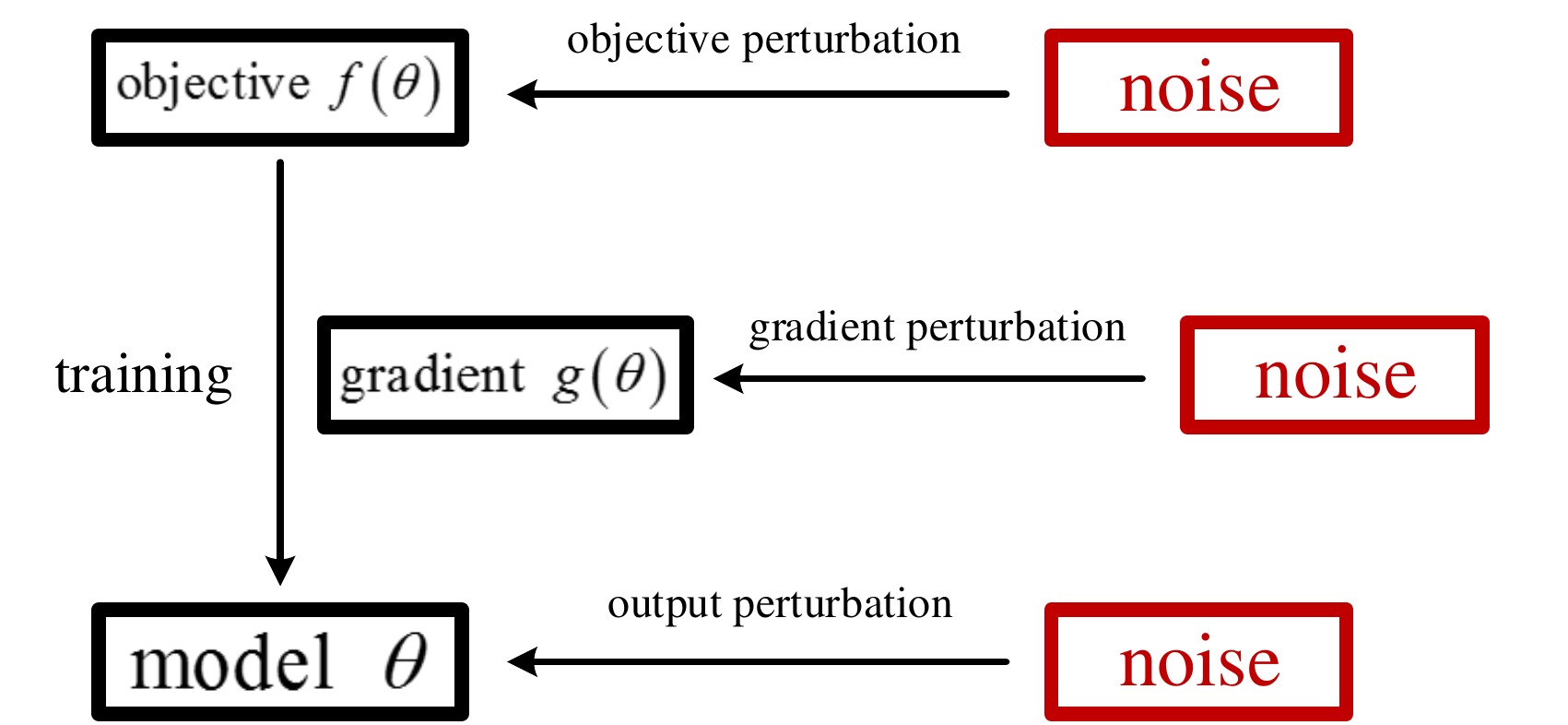}
\caption{Privacy noise mechanisms for differentially private machine learning algorithm.}
\label{fig-privacy}
\end{figure*}

In this paper, we consider the regularized ERM. Formally, let $\gX$ denote the input space and $\gY$ denote the output space, given a dataset $\gD = \{\vz_1, \vz_2,\dots,\vz_n\}$ where $\vz_i = (\vx_i, y_i) \in \gX \times \gY$, the objective of regularized ERM is to get a model $\hat{\vtheta}$ from the following unconstrained optimization problem:
\begin{equation}
\label{eq-loss}
\hat{\vtheta} = \arg\min\limits_{\vtheta}L(\vtheta, \gD) + \frac{\Lambda}{2n}||\vtheta||^2,
\end{equation}
where $L(\vtheta, \gD) = \frac{1}{n}\sum_{i=1}^n \ell(\vtheta, \vz_i)$, $\ell(\vtheta, \vz_i)$ is the loss function for $\vz_i$. The term of $\frac{\Lambda}{2n}||\vtheta||^2$ is the regularizer that prevents over-fitting. 

There are 3 obvious candidates for where to add privacy-preserving noise for machine learning algorithms (see Figure~\ref{fig-privacy}). First, we could add noise to the objective function, which gives us the objective perturbation mechanism. Second, we could add noise to the gradients at each iteration, which gives us the gradient perturbation mechanism. Finally, we can add noise to the model after the training process, which gives us the output perturbation mechanism. In this paper, we focus our research on objective perturbation mechanism among the three.

The objective perturbation mechanism was first proposed in~\cite{chaudhuri2011differentially} to make ERM algorithms achieving $\eps$-differential privacy. Later, the range of problems to which objective perturbation applies has been extended in~\cite{kifer2012private}, including convex ERM problems, non-differentiable regularizers and tighter error bound. 

\begin{thm}[Private Convex Optimization via Objective Perturbation in~\cite{kifer2012private}]
\label{objective}
Let $\gC$ be a closed convex subset of $\sR^p$, $\gD = \{\vz_1, \vz_2,\dots,\vz_n\}$ be a dataset, 
$\ell(\vtheta, \vz_i)$ be convex loss function with continuous Hessian, $||\nabla \ell(\vtheta, \vz_i)|| \le \zeta$ and $||\nabla^2 \ell(\vtheta, \vz_i)|| \le \lambda$ (for all $\vz$ and for all $\vtheta \in \gC$), $L(\vtheta, \gD) = \frac{1}{n}\sum_{i=1}^n \ell(\vtheta, \vz_i)$, $\Delta_{\eps} = \frac{2\lambda}{\eps}$. 
Then 
\begin{equation}
\label{eq-bg1}
\hat{\vtheta}_{\eps} = \arg\min\limits_{\vtheta} L(\vtheta, \gD) + \frac{\Lambda}{2n}||\vtheta||^2 + \frac{1}{n}\rvb_{\eps}^T \vtheta + \frac{\Delta_{\eps}}{2n} ||\vtheta||^2
\end{equation}
is $(\eps,\delta)$-differentially private if  $\rvb_{\eps} \sim 
\mathcal{N}\left(0, \frac{\zeta^2(8\log \frac{2}{\delta}+4\eps)}{\eps^2}\mI\right)$.
\end{thm}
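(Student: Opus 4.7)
The plan is to verify the $(\epsilon,\delta)$-DP guarantee by comparing the densities of $\hat{\vtheta}_\eps$ on a pair of neighboring datasets $\gD, \gD'$ that differ only in a single record $\vz_n$ vs.\ $\vz_n'$. Because the penalties $\tfrac{\Lambda}{2n}\|\vtheta\|^2$ and $\tfrac{\Delta_\eps}{2n}\|\vtheta\|^2$ make the perturbed objective strictly convex, the minimizer is unique and characterized by the first-order condition
\begin{equation*}
\rvb_\eps \;=\; -\sum_{i=1}^{n} \nabla \ell(\hat{\vtheta}_\eps,\vz_i) \;-\; (\Lambda+\Delta_\eps)\,\hat{\vtheta}_\eps .
\end{equation*}
For each fixed dataset this equation defines a smooth bijection between the noise $\rvb_\eps$ and the output $\hat{\vtheta}_\eps$, whose Jacobian is
$J_\gD(\vtheta) = \sum_{i} \nabla^2 \ell(\vtheta,\vz_i) + (\Lambda+\Delta_\eps)\mI$.

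Next I would apply the change-of-variables formula to write the output density as $p_\gD(\vtheta) = p_{\rvb_\eps}(\rvb_\eps(\vtheta;\gD))\,|\det J_\gD(\vtheta)|$, so that the pointwise likelihood ratio factors as
\begin{equation*}
\frac{p_\gD(\vtheta)}{p_{\gD'}(\vtheta)} \;=\; \underbrace{\frac{p_{\rvb_\eps}(\rvb_\eps(\vtheta;\gD))}{p_{\rvb_\eps}(\rvb_\eps(\vtheta;\gD'))}}_{\text{(A) noise ratio}} \;\cdot\; \underbrace{\frac{|\det J_\gD(\vtheta)|}{|\det J_{\gD'}(\vtheta)|}}_{\text{(B) Jacobian ratio}}.
\end{equation*}
I would then bound the two factors separately so their product is at most $e^\eps$ except on an event of probability $\delta$.

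For factor (B), the two Jacobians differ by the rank-at-most-two matrix $\nabla^2\ell(\vtheta,\vz_n)-\nabla^2\ell(\vtheta,\vz_n')$, whose operator norm is at most $2\lambda$ by hypothesis. Using $\log\det(A+E)-\log\det(A) = \int_0^1 \operatorname{tr}((A+tE)^{-1}E)\,dt$ together with $J_\gD,J_{\gD'}\succeq \Delta_\eps \mI$ (since the unregularized Hessian is PSD on a convex loss), the choice $\Delta_\eps = 2\lambda/\eps$ yields $|\log(\text{B})|\le \eps/2$, deterministically. For factor (A), the first-order equation gives $\rvb_\eps(\vtheta;\gD)-\rvb_\eps(\vtheta;\gD') = -(\nabla\ell(\vtheta,\vz_n)-\nabla\ell(\vtheta,\vz_n'))$, so its norm is at most $2\zeta$; applying the standard Gaussian-mechanism tail bound to $\rvb_\eps\sim \mathcal{N}(0,\sigma^2 \mI)$ with $\sigma^2=\zeta^2(8\log(2/\delta)+4\eps)/\eps^2$ shows that $|\log(\text{A})|\le \eps/2$ except with probability at most $\delta$. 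Combining (A) and (B) gives the $(\eps,\delta)$-DP inequality of Definition~\ref{df-dp}.

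The main technical obstacle is the Jacobian bound (B): this is where the extra stabilizer $\Delta_\eps \mI$ is essential, and where one must carefully exploit the operator-norm bound on $\nabla^2\ell$ together with convexity of the per-example losses to keep $J_\gD,J_{\gD'}$ bounded below by $\Delta_\eps \mI$. The noise-ratio bound (A) is comparatively standard Gaussian-mechanism calibration once the sensitivity $2\zeta$ of the implicit-function map $\gD\mapsto \rvb_\eps$ is established, but it is crucial that the additive $4\eps$ in the variance absorbs the linear term arising from the Gaussian log-density expansion, so that the final privacy loss is $\eps/2+\eps/2=\eps$ rather than something larger.
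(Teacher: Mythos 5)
The paper offers no proof of this statement at all: Theorem~\ref{objective} is imported verbatim from Kifer, Smith, and Thakurta (2012) and used as a black box, so there is nothing internal to compare against. Your reconstruction is in fact the standard objective-perturbation privacy analysis from that source: characterize the minimizer by the stationarity condition, invert it to view the noise as a function of the output, apply the change-of-variables formula, and split the likelihood ratio into a Gaussian-shift factor (handled by $(\eps/2,\delta)$ Gaussian-mechanism calibration with sensitivity $2\zeta$) and a Jacobian-determinant factor (handled deterministically by the stabilizer $\Delta_{\eps}\mI$). The structure, the role of each hypothesis, and the budget accounting are all the right ones.

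One genuine gap deserves flagging: your bound on factor (B) rests on the claim that $\nabla^2\ell(\vtheta,\vz_n)-\nabla^2\ell(\vtheta,\vz_n')$ has rank at most two, and the $\eps/2$ conclusion further needs the signed split of this difference into a PSD part and an NSD part each of operator norm at most $\lambda$, so that $\operatorname{tr}\bigl((A+tE)^{-1}E\bigr)$ lies in $[-\lambda/\Delta_{\eps},\,\lambda/\Delta_{\eps}]=[-\eps/2,\eps/2]$. The rank-two property does not follow from the hypotheses as stated here (only $\|\nabla^2\ell\|\le\lambda$ is assumed); it requires each per-example Hessian to have rank one, i.e.\ the generalized-linear structure $\ell(\vtheta,\vz)=\ell(\vtheta^{T}\vx,y)$ that Kifer et al.\ assume but this paper's restatement silently drops. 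Without that, the trace/nuclear-norm bound degrades to $\gO(p\lambda/\Delta_{\eps})=\gO(p\eps)$ and the theorem fails as written, so you should state the rank-one-Hessian hypothesis explicitly. Two smaller points: the additive $4\eps$ in the variance is what absorbs the \emph{quadratic} term $\|v\|^{2}/(2\sigma^{2})$ of the Gaussian log-ratio (the $8\log\frac{2}{\delta}$ controls the linear term's tail), not the other way around; and the change of variables implicitly needs the minimizer to be interior/unconstrained so that the first-order condition holds with equality, which is consistent with how (\ref{eq-bg1}) is written but sits awkwardly with the mention of the constraint set $\gC$.
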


\begin{remark}
The reasons why we choose objective perturbation (not output perturbation or gradient perturbation) as the basis of our analysis are as follows: a) as one of the most important mechanisms for differentially private algorithms, objective perturbation has shown superior performance on general objectives~\cite{chaudhuri2011differentially,kifer2012private,iyengar2019towards}; b) for objective perturbation, our analysis of utility and $\eps$ can be directly established on the objective function, which would make the process intuitive and concise. 
\end{remark}

\section{Traditional Utility Guarantee Analysis in Private ERM}
The exploration of the effect of $\eps$ on utility in private ERM started from the very beginning in~\cite{chaudhuri2011differentially}, where the differentially private ERM algorithms were first proposed. They found that differential privacy not only provides privacy guarantee, but also reduces the model utility. They measure the utility by the number of samples $n$ required to achieve generalization error $\xi$ on the empirical loss of a reference predictor $\vtheta_0$, and bound $n$ by using the analysis method of~\cite{Shalev2008SVM}. Their bounds provide the sample requirement of differentially private algorithms for achieving generalization error. That is, the algorithms will achieve higher utility with more training data, and as long as there is abundant training data, the utility of differentially private algorithms can be very close to non-private one. They also show that the error rate increases as the privacy requirements are made more stringent, which they call the ``price of privacy''. This is the first explicit reference to the tradeoff between privacy and utility of differentially private learning algorithms.

However, this form of expression is difficult to intuitively understand how $\eps$ effects the model utility, neither the utility loss nor the empirical loss can be acquired. To this end,~\cite{kifer2012private} provided an improved, data-dependent utility analysis, which used empirical risk $J(\vtheta^{priv}, \gD) - J(\hat{\vtheta}, \gD)$ to measure the utility, where $\hat{\vtheta}$ denotes the optimal predictor, $\vtheta^{priv}$ denotes the private predictor and $J(\vtheta, \gD)$ denotes the empirical loss. By using the tail bounds for the noise distributions used in Theorem~\ref{objective}, they obtain the following empirical risk bound for their objective perturbed learning algorithm. 

\begin{thm}[Empirical Risk Bound of Differentially Private Learning Algorithm in~\cite{kifer2012private}]
\label{theorem-bound}
Let $\hat{\vtheta}$ be the true minimizer of $J(\vtheta, \gD)$ over the closed convex set $\gC$ and let $\hat{\vtheta}_{\eps}$ be the output of (\ref{eq-bg1}) in Theorem~\ref{objective}. We have $\mathbb{E}\left[ J(\hat{\vtheta}_{\eps}, \gD) - J(\hat{\vtheta}, \gD) \right] = O \left( \frac{\zeta ||\hat{\vtheta}||\sqrt{p \log(1/\delta)}}{\eps n} \right)$.
\end{thm}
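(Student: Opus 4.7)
My plan is to proceed in three main steps: (i) turn the optimality of $\hat{\vtheta}_\eps$ for the perturbed objective into an inequality on the excess empirical risk; (ii) bound the displacement $\|\hat{\vtheta}_\eps-\hat{\vtheta}\|$ by the noise magnitude using the strong convexity induced by the $\tfrac{\Lambda}{2n}\|\vtheta\|^{2}$ regularizer; and (iii) take expectation over the Gaussian $\rvb_\eps$ and balance the resulting expressions to reach the stated order.

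First, since $\hat{\vtheta}_\eps$ minimizes the perturbed objective in (\ref{eq-bg1}) over $\gC$ and $\hat{\vtheta}\in\gC$ is a feasible point, I would start from $J_{priv}(\hat{\vtheta}_\eps)\le J_{priv}(\hat{\vtheta})$ and peel off the common $J(\cdot,\gD)$ piece to obtain
\[
J(\hat{\vtheta}_\eps,\gD)-J(\hat{\vtheta},\gD)\;\le\;\tfrac{1}{n}\rvb_\eps^{\top}(\hat{\vtheta}-\hat{\vtheta}_\eps)\;+\;\tfrac{\Delta_\eps}{2n}\bigl(\|\hat{\vtheta}\|^{2}-\|\hat{\vtheta}_\eps\|^{2}\bigr).
\]
Cauchy--Schwarz applied to the first term and the factorization $a^{2}-b^{2}=(a+b)(a-b)$ combined with $\bigl|\|\hat{\vtheta}\|-\|\hat{\vtheta}_\eps\|\bigr|\le\|\hat{\vtheta}-\hat{\vtheta}_\eps\|$ applied to the second reduce everything to bounding $\|\rvb_\eps\|\cdot\|\hat{\vtheta}-\hat{\vtheta}_\eps\|$ and $\Delta_\eps\,\|\hat{\vtheta}\|\cdot\|\hat{\vtheta}-\hat{\vtheta}_\eps\|$.

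Second, to control the displacement I would invoke a Chaudhuri--Monteleoni--Sarwate style minimizer-perturbation lemma: viewing $J_{priv}$ as $J$ plus the convex perturbation $g(\vtheta)=\tfrac{1}{n}\rvb_\eps^{\top}\vtheta+\tfrac{\Delta_\eps}{2n}\|\vtheta\|^{2}$ and using that $J$ is $(\Lambda/n)$-strongly convex, the first-order optimality of $\hat{\vtheta}_\eps$ together with the strong-convexity inequality for $J$ gives
\[
\|\hat{\vtheta}_\eps-\hat{\vtheta}\|\;\le\;\tfrac{n}{\Lambda}\,\|\nabla g(\hat{\vtheta}_\eps)\|\;\le\;\tfrac{1}{\Lambda}\bigl(\|\rvb_\eps\|+\Delta_\eps\,\|\hat{\vtheta}_\eps\|\bigr).
\]
Plugging this back into the inequality of step one and taking expectation over $\rvb_\eps$, the leading term is of order $\tfrac{1}{n\Lambda}\mathbb{E}\|\rvb_\eps\|^{2}$. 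A direct variance computation using $\rvb_\eps\sim\mathcal{N}\!\bigl(0,\tfrac{\zeta^{2}(8\log(2/\delta)+4\eps)}{\eps^{2}}\mI\bigr)$ yields $\mathbb{E}\|\rvb_\eps\|^{2}=O\!\bigl(p\zeta^{2}\log(1/\delta)/\eps^{2}\bigr)$ and, by Jensen, $\mathbb{E}\|\rvb_\eps\|=O\!\bigl(\zeta\sqrt{p\log(1/\delta)}/\eps\bigr)$.

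Third, the polishing step is to pick the regularization parameter $\Lambda$ so as to equate the noise-driven contribution $\tfrac{\mathbb{E}\|\rvb_\eps\|^{2}}{n\Lambda}$ against the bias-type contribution $\tfrac{\Delta_\eps\|\hat{\vtheta}\|\,\mathbb{E}\|\rvb_\eps\|}{n\Lambda}$; with $\Lambda\asymp\zeta\sqrt{p\log(1/\delta)}/(\eps\|\hat{\vtheta}\|)$ the intermediate $O\!\bigl(p\zeta^{2}\log(1/\delta)/(n\Lambda\eps^{2})\bigr)$ bound collapses to the desired $O\!\bigl(\zeta\|\hat{\vtheta}\|\sqrt{p\log(1/\delta)}/(\eps n)\bigr)$. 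The hard part I expect is not in any single step but in the bookkeeping that stitches them together: because $\hat{\vtheta}_\eps$ is coupled to the Gaussian $\rvb_\eps$, the right-hand side of step one contains cross terms that do not vanish in expectation, and replacing $\|\hat{\vtheta}_\eps\|$ by the a~priori surrogate $\|\hat{\vtheta}\|+\|\hat{\vtheta}_\eps-\hat{\vtheta}\|$ and then iterating the displacement bound once requires a short self-consistency argument before the dominant $1/\eps$ scaling emerges cleanly.
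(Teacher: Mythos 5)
First, a point of reference: the paper itself gives no proof of Theorem~\ref{theorem-bound}; it is quoted verbatim from Kifer--Smith--Thakurta, so your proposal can only be judged against the standard argument for that result. Your steps (i) and (ii) are that standard argument's opening: the optimality inequality $J(\hat{\vtheta}_{\eps},\gD)-J(\hat{\vtheta},\gD)\le \tfrac{1}{n}\rvb_{\eps}^{\top}(\hat{\vtheta}-\hat{\vtheta}_{\eps})+\tfrac{\Delta_{\eps}}{2n}(\|\hat{\vtheta}\|^{2}-\|\hat{\vtheta}_{\eps}\|^{2})$ is correct, the minimizer-perturbation lemma is the right tool, and your moment computations for $\rvb_{\eps}$ are fine.

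The gap is in step (iii), and it is not mere bookkeeping. The two quantities you propose to ``equate,'' $\tfrac{\mathbb{E}\|\rvb_{\eps}\|^{2}}{n\Lambda}$ and $\tfrac{\Delta_{\eps}\|\hat{\vtheta}\|\,\mathbb{E}\|\rvb_{\eps}\|}{n\Lambda}$, both scale as $1/\Lambda$, so no choice of $\Lambda$ balances them; their ratio is $\mathbb{E}\|\rvb_{\eps}\|/(\Delta_{\eps}\|\hat{\vtheta}\|)$, a condition on $\Delta_{\eps}$, not on $\Lambda$. Concretely, substituting your $\Lambda\asymp\zeta\sqrt{p\log(1/\delta)}/(\eps\|\hat{\vtheta}\|)$ does collapse the first term to the target rate, but the second term becomes $\Delta_{\eps}\|\hat{\vtheta}\|^{2}/n=2\lambda\|\hat{\vtheta}\|^{2}/(\eps n)$, which is not $O(\zeta\|\hat{\vtheta}\|\sqrt{p\log(1/\delta)}/(\eps n))$ unless $\lambda\|\hat{\vtheta}\|\lesssim\zeta\sqrt{p\log(1/\delta)}$ happens to hold. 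There is also a structural objection to tuning $\Lambda$ at all: $\Lambda$ sits inside $J$ and hence inside both $\hat{\vtheta}$ and the excess risk being bounded, so assigning it a noise-dependent value changes the statement rather than proving it. The knob the original argument turns is $\Delta_{\eps}$: privacy only requires $\Delta_{\eps}\ge 2\lambda/\eps$, the added ridge $\tfrac{\Delta_{\eps}}{2n}\|\vtheta\|^{2}$ itself supplies $\tfrac{\Delta_{\eps}}{n}$-strong convexity (compare $\hat{\vtheta}_{\eps}$ to the minimizer of $J+\tfrac{\Delta_{\eps}}{2n}\|\cdot\|^{2}$, so that the perturbation $g$ is purely linear with $\|\nabla g\|=\|\rvb_{\eps}\|/n$), and the resulting trade-off $\tfrac{\|\rvb_{\eps}\|^{2}}{n\Delta_{\eps}}+\tfrac{\Delta_{\eps}\|\hat{\vtheta}\|^{2}}{2n}$ is minimized at $\Delta_{\eps}\asymp\|\rvb_{\eps}\|/\|\hat{\vtheta}\|$, which yields exactly $O(\zeta\|\hat{\vtheta}\|\sqrt{p\log(1/\delta)}/(\eps n))$. (This also exposes a looseness in the paper's own statement: with $\Delta_{\eps}$ hard-wired to $2\lambda/\eps$ as in Theorem~\ref{objective}, the clean bound holds only under the side condition relating $\lambda\|\hat{\vtheta}\|$ to $\zeta\sqrt{p\log(1/\delta)}$; the original source treats $\Delta$ as tunable.) Your self-consistency remark about $\|\hat{\vtheta}_{\eps}\|$ versus $\|\hat{\vtheta}\|$ is a real but minor issue and is handled the same way once the strong-convexity modulus comes from $\Delta_{\eps}$.
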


Theorem~\ref{theorem-bound} clearly illustrates the upper bound of utility loss in the worst case. This method and its adaptions have been widely adopted up to the present literatures~\cite{bassily2014private,wang2017differentially,iyengar2019towards} for measuring the utility of private learning algorithms. Although these ``utility theorems'' can give an intuitive impression of the effect of $\eps$ on utility, their bounds are too conservative to find an appropriate value of $\eps$ (usually leading to a much larger value of $\eps$). This makes them difficult to do something useful in practical applications.

In order to overcome this conservatism and give a more realistic analysis of the effect of $\eps$ on utility, we propose an ``empirical'' theorem approach. 

\section{Our Approach}
In this section, we will introduce our approach for analyzing the effect of $\eps$ on the utility with gradual steps.

First of all, we formalize our final goal by asking the following question: how would the utility of the private model change if we modify the value of $\eps$? To answer the question, we will begin by studying the change in model parameters due to modifying the value of $\eps$. Formally, this change is $\hat{\vtheta}_{\bar{\eps}} - \hat{\vtheta}_{\hat{\eps}}$, where $\hat{\vtheta}_{\bar{\eps}}$ and $\hat{\vtheta}_{\hat{\eps}}$ are the true minimizers of Equation (\ref{eq-bg1}) that achieves $(\bar{\eps}, \delta)$- and $(\hat{\eps}, \delta)$-differential privacy on the same dataset $\gD$, respectively, for a certain $\delta$. Apparently, our analysis focuses on the influence of $\eps$ on the parameters $\hat{\vtheta}_{\eps}$, which is $\frac{\partial \hat{\vtheta}_{\eps}}{\partial \eps}$.

\begin{thm}
\label{thm-1}
Let $L(\vtheta, \gD), \Delta_{\eps}$ and $\hat{\vtheta}_{\eps}$ be defined as in Theorem~\ref{objective}.
Let $\mW_{\eps} = \nabla^2 L(\hat{\vtheta}_{\eps}, \gD)+\frac{\Lambda}{n}+\frac{1}{n}\Delta_{\eps}, \rvb'_{\eps} = \frac{\partial \rvb_{\eps}}{\partial \eps}, \Delta'_{\eps} = \frac{\partial \Delta_{\eps}}{\partial \eps}$. Then, we have
\begin{equation}
\label{eq-thm1}
\frac{\partial \hat{\vtheta}_{\eps}}{\partial \eps} = - \frac{1}{n} \mW_{\eps}^{-1}\left(\rvb'_{\eps} + \Delta'_{\eps} \hat{\vtheta}_{\eps}\right).
\end{equation}
\end{thm}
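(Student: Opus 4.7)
The plan is to apply the implicit function theorem to the first-order optimality condition of the strongly convex objective in (\ref{eq-bg1}). First I would write down that optimality condition. Since $\hat{\vtheta}_{\eps}$ is the unconstrained minimizer of a smooth, strongly convex function, it satisfies
\begin{equation*}
\nabla L(\hat{\vtheta}_{\eps}, \gD) + \frac{\Lambda}{n}\hat{\vtheta}_{\eps} + \frac{1}{n}\rvb_{\eps} + \frac{\Delta_{\eps}}{n}\hat{\vtheta}_{\eps} = \vzero.
\end{equation*}
This is an identity in $\eps$, so I can implicitly differentiate both sides with respect to $\eps$. Using the chain rule on the first term (which is legal because $\ell$ has continuous Hessian by the assumption in Theorem~\ref{objective}), I obtain
\begin{equation*}
\nabla^2 L(\hat{\vtheta}_{\eps}, \gD)\,\frac{\partial \hat{\vtheta}_{\eps}}{\partial \eps} + \frac{\Lambda}{n}\frac{\partial \hat{\vtheta}_{\eps}}{\partial \eps} + \frac{1}{n}\rvb'_{\eps} + \frac{\Delta'_{\eps}}{n}\hat{\vtheta}_{\eps} + \frac{\Delta_{\eps}}{n}\frac{\partial \hat{\vtheta}_{\eps}}{\partial \eps} = \vzero.
\end{equation*}

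Next I would collect the three terms involving $\partial \hat{\vtheta}_{\eps}/\partial \eps$ on the left and move the remaining terms to the right. The coefficient that multiplies $\partial \hat{\vtheta}_{\eps}/\partial \eps$ is exactly $\mW_{\eps} = \nabla^2 L(\hat{\vtheta}_{\eps}, \gD) + \frac{\Lambda}{n} + \frac{1}{n}\Delta_{\eps}$ as defined in the statement. Because $L$ is convex (so $\nabla^2 L \succeq 0$) and the scalar term $\tfrac{\Lambda+\Delta_{\eps}}{n}$ is strictly positive, $\mW_{\eps}$ is symmetric positive definite and hence invertible. Multiplying through by $\mW_{\eps}^{-1}$ and dividing by $n$ yields precisely (\ref{eq-thm1}).

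The subtle point, and the main thing I would need to justify carefully, is how to interpret $\rvb'_{\eps} = \partial \rvb_{\eps}/\partial \eps$, since $\rvb_{\eps}$ is a random variable whose distribution (not a deterministic function) depends on $\eps$ through the variance $\sigma^2(\eps) = \zeta^2(8\log(2/\delta)+4\eps)/\eps^2$. The standard device is the reparameterization $\rvb_{\eps} = \sigma(\eps)\,\rvb_0$ with a fixed standard Gaussian draw $\rvb_0 \sim \mathcal{N}(\vzero, \mI)$; then on each sample path $\rvb_{\eps}$ is genuinely a differentiable function of $\eps$ with $\rvb'_{\eps} = \sigma'(\eps)\,\rvb_0$. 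Similarly, $\Delta_{\eps} = 2\lambda/\eps$ gives $\Delta'_{\eps} = -2\lambda/\eps^2$, so both right-hand-side derivatives are well-defined. Once this reparameterization is in place, the rest of the argument is routine implicit differentiation and there is no further obstacle.
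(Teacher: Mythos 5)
Your proposal is correct and follows essentially the same route as the paper: implicitly differentiating the first-order optimality condition of (\ref{eq-bg1}) with respect to $\eps$, collecting the terms multiplying $\partial\hat{\vtheta}_{\eps}/\partial\eps$ into $\mW_{\eps}$, and inverting. The two points you flag as needing care --- the positive-definiteness of $\mW_{\eps}$ and the reparameterization $\rvb_{\eps}=\sigma(\eps)\,\rvb_0$ giving meaning to $\rvb'_{\eps}$ --- are handled by the paper as well (the latter in the discussion immediately following the theorem), so your write-up is if anything slightly more self-contained.
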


\begin{proof}
Due to the fact that the derivative of the objective function (\ref{eq-bg1}) at $\hat{\vtheta}_{\eps}$ is $\vzero$, we have that

\begin{equation}
\label{eq-1}
\frac{\partial  L(\hat{\vtheta}_{\eps}, \gD)}{\partial \vtheta} + \frac{1}{n} \left( \Lambda \hat{\vtheta}_{\eps} + \rvb_{\eps} + \Delta_{\eps} \hat{\vtheta}_{\eps} \right) = \vzero.
\end{equation}
Take the derivative of (\ref{eq-1}) with respect to $\eps$, we have:
\begin{equation}\nonumber
\frac{\partial^2 L(\hat{\vtheta}_{\eps}, \gD)}{\partial \vtheta^2} \frac{\partial \hat{\vtheta}_{\eps}}{\partial \eps} + \frac{1}{n} \left(  \Lambda \frac{\partial \hat{\vtheta}_{\eps}}{\partial \eps} + \rvb'_{\eps} + \frac{\partial (\Delta_{\eps}\hat{\vtheta}_{\eps})}{\partial \eps} \right) = \vzero.
\end{equation}
Re-arranging the terms, we get
\begin{equation}\nonumber
\left(\nabla^2 L(\hat{\vtheta}_{\eps}, \gD)+\frac{ \Lambda}{n}+\frac{1}{n}\Delta_{\eps} \right) \frac{\partial \hat{\vtheta}_{\eps}}{\partial \eps} = -\frac{1}{n}\left(\rvb'_{\eps} +  \Delta'_{\eps} \hat{\vtheta}_{\eps}\right).
\end{equation}
Finally, we have
\begin{equation}
\label{eq-3}
\frac{\partial \hat{\vtheta}_{\eps}}{\partial \eps} = - \frac{1}{n} \mW_{\eps}^{-1}\left(\rvb'_{\eps} + \Delta'_{\eps} \hat{\vtheta}_{\eps}\right).
\end{equation}
\end{proof}

To obtain $\frac{\partial \hat{\vtheta}_{\eps}}{\partial \eps}$, we need to derive $\rvb_{\eps}$ over $\eps$. However, since $\rvb_{\eps}$ doesn't have an explicit function expression, we can not compute $\frac{\partial \rvb_{\eps}}{\partial \eps}$ directly. Thus, we need to first form a function that outputs $\rvb_{\eps}$. To this end, we make use of the reparameterization trick, which is commonly applied to training variational autoencoders with continuous latent variables using backpropagation~\cite{Kingma2014AutoEncodingVB,Jimenez2013Stochastic}. For a given normal distribution $z \sim \mathcal{N}(\mu, \sigma^2)$, it can be re-written as $z = \mu + \sigma \cdot \mathcal{N}(0, 1)$ by reparameterization, making it trivial to compute $\frac{\partial z}{\partial \sigma}$ and $\frac{\partial z}{\partial \mu}$.

Now we will show how to use the reparameterization trick to compute $\frac{\partial \rvb_{\eps}}{\partial \eps}$. Let $\rvb_{\eps} = (\ervb_1, \dots, \ervb_d)$, then $\rvb'_{\eps} = (\ervb'_1, \dots, \ervb'_d)$. Since $\rvb_{\eps} \sim \mathcal{N}\left(0, \frac{\zeta^2(8\log \frac{2}{\delta}+4\eps)}{\eps^2}\mI\right)$, we have that for each $i \in [d]$, $\ervb_i$ is independently sampled from $\mathcal{N}\left(0, \frac{\zeta^2(8\log \frac{2}{\delta}+4\eps)}{\eps^2}\right)$. By using reparameterization trick, we obtain $\ervb_i = \left( \frac{\zeta\sqrt{8\log \frac{2}{\delta}+4\eps}}{\eps}\right) \cdot \mathcal{N}(0,1)$. Note that $\mathcal{N}(0,1)$ can be seen as a constant term for $\eps$, we have that $\ervb'_i =\left( \frac{\zeta\left( 4\log \frac{2}{\delta}+\eps \right)}{\eps^2 \sqrt{2\log \frac{2}{\delta}+\eps}} \right) \cdot \mathcal{N}(0,1)$. Thus, we have:
\begin{equation}
\label{partial-b}
\rvb'_{\eps}= \left( \frac{\zeta\left( 4\log \frac{2}{\delta}+\eps \right)}{\eps^2 \sqrt{2\log \frac{2}{\delta}+\eps}} \right) \cdot \mathcal{N}(0,\mI).
\end{equation}

The assumptions in Theorem~\ref{thm-1} can be satisfied on many commonly used loss functions, such as logistic regression, Huber SVM and quadratic loss, and regularizers, such as $L_2$-norm regularizer. However, this does exclude some classes of loss functions (e.g. hinge loss). In Section~\ref{sec-violation}, we will talk about this cases and show that some of these assumptions can be weakened in certain cases.

Once $\frac{\partial \hat{\vtheta}_{\eps}}{\partial \eps}$ is calculated, we can further deduce the effect of $\eps$ on the utility. In particular, let $F(\vtheta, \gD)$ be the utility of $\vtheta$ on dataset $\gD$, then, by applying the chain rule, we have the following expression:
\begin{equation}
\label{equation-F}
\frac{\partial F(\hat{\vtheta}_{\eps}, \gD)}{\partial \eps} = \left( \nabla_{\vtheta} F(\hat{\vtheta}_{\eps}, \gD) \right)^T \frac{\partial \hat{\vtheta}_{\eps}}{\partial \eps}
\end{equation}


Now we can answer the question posed at the beginning of this section by utilizing the Taylor expansion of $F(\bar{\vtheta}_{\bar{\eps}}, \gD) - F(\hat{\vtheta}_{\hat{\eps}}, \gD)$ at $\bar{\eps}$: 
\begin{equation}
\label{eq-utility}
F(\hat{\vtheta}_{\bar{\eps}}, \gD) - F(\hat{\vtheta}_{\hat{\eps}}, \gD) \approx \frac{\partial F(\hat{\vtheta}_{\bar{\eps}}, \gD)}{\partial \eps}(\bar{\eps}-\hat{\eps}).
\end{equation}

Intuitively, equation (\ref{eq-utility}) formulizes the relationship between the change in $\eps$ and the change in utility. Furthermore, once we have obtained $F(\hat{\vtheta}_{\bar{\eps}}, \gD)$ with a certain $\bar{\eps}$ (through one round training), (\ref{eq-utility}) represents the relationship between $\hat{\eps}$ and its utility, which could be used for choosing $\eps$ according to expected utility. This application would be further discussed in Section~\ref{sec-choose}. Generally, $\bar{\eps}$ in (\ref{eq-utility}) is called measuring point, and $\hat{\eps}$ in (\ref{eq-utility}) is called target point.

In essence, both $F(\hat{\vtheta}_{\eps}, \gD)$ in (\ref{equation-F}) and $L(\hat{\vtheta}_{\eps}, \gD)$ in (\ref{eq-thm1}) are used to measure the error between predictions and true labels, which sometimes may even be the same in  applications. Thus, for the simplicity of exposition, we assume that all the assumptions in $L(\hat{\vtheta}_{\eps}, \gD)$ are equally valid in $F(\hat{\vtheta}_{\eps}, \gD)$.

\section{Practical Considerations}
In this section, we are going to talk about two practical considerations of our approximation approach. 


\subsection{Assumption Violation}
\label{sec-violation}
Theorem~\ref{thm-1} relies on several assumptions that may be violated in practice. The first would be non-convex or non-convergent objective functions. In this case, the obtained parameters $\tilde{\vtheta}_{\eps}$ may not be the global minimum, thus Equation (\ref{eq-1}) may not hold. In such cases, we can form a convex quadratic approximation of the loss around $\tilde{\vtheta}_{\eps}$, i.e., 
\begin{equation}\nonumber
\resizebox{.99\linewidth}{!}{$
L(\vtheta, \gD) \approx L(\tilde{\vtheta}_{\eps}, \gD) + \nabla_{\vtheta}L(\tilde{\vtheta}_{\eps}, \gD)(\tilde{\vtheta}_{\eps} - \vtheta) + (\tilde{\vtheta}_{\eps} - \vtheta)^T\nabla^2_{\vtheta}L(\tilde{\vtheta}_{\eps}, \gD)(\tilde{\vtheta}_{\eps} - \vtheta),
$}
\end{equation}
which corresponds to adding $L_2$ regularization on $\vtheta$.

The second violation would be non-differentiable losses, in which neither $\nabla_{\vtheta}L(\hat{\vtheta}_{\eps}, \gD)$ nor $\nabla^2_{\vtheta}L(\hat{\vtheta}_{\eps}, \gD)$ does exist. In these cases of violation, we can train non-differentiable models and swap out non-differentiable components for smoothed versions (as suggested in~\cite{Koh2017Understanding}). For example, we can approximated $Hinge(s) = max(0, 1-s)$ with following form $SmoothHinge(s,t) = t\log(1 + e^{\frac{1-s}{t}})$, which approaches the hinge loss at $t \to 0$.
\begin{remark}
The above analysis shows that our approximate approach can be extended to the non-convex, non-convergent and non-differentiable situations. However, in this paper, we mainly focus on verifying the effectiveness of our approach. Therefore, we only consider the normal case in our evaluations. We leave the important analysis of our approximate approach for non-convex, non-convergent and non-differentiable cases to our future work.
\end{remark}
\subsection{Error Analysis}
\label{sec-error-ana}
To analyze the error bound of our approach, we need to  complete the Taylor expression of Equation (\ref{eq-utility}) as follows:
\begin{equation}
F(\hat{\vtheta}_{\bar{\eps}}, \gD) - F(\hat{\vtheta}_{\hat{\eps}}, \gD) \approx \frac{\partial F(\hat{\vtheta}_{\eps}, \gD)}{\partial \eps}(\bar{\eps}-\hat{\eps}) + r(\bar{\eps}-\hat{\eps}),
\end{equation}
where $r(\bar{\eps}-\hat{\eps}) = \frac{\partial^2 F(\hat{\vtheta}_{\tilde{\eps}}, \gD)}{\partial \eps^2}(\bar{\eps}-\hat{\eps})^2$ is the Taylor remainder, $\tilde{\eps} \in [\bar{\eps},\hat{\eps}]$. Obviously, our error mainly comes from omitting the Taylor remainder. Thus, to derive the error bound of Equation (\ref{eq-utility}), we require the Taylor remainder to be bounded. 


\begin{thm}
\label{thm-bound}
Assume that $\ell(\vtheta, \gD)$ is third-differentiable and convex in $\vtheta$ with continuous Hessian, $||\nabla \ell(\vtheta, \vz_i)|| \le c$, $||\nabla^2 \ell(\vtheta, \vz_i)|| \le \lambda$, $\Delta_{\eps}=\frac{2\lambda}{\eps}$, $\rvb_{\eps} \sim \mathcal{N}\left(0, \frac{c^2(8\log \frac{2}{\delta}+4\eps)}{\eps^2}\mI\right)$. If there exists a constant $s$ for $\forall \vz \in \gD, \vtheta \in \gC, ||\nabla^3_{\vtheta}\ell(\vtheta, \gD)|| \le s$. Then, for any $\bar{\eps},\hat{\eps}$ we have
\begin{equation}
\label{eq-bound}
|F(\hat{\vtheta}_{\bar{\eps}}, \gD) - F(\hat{\vtheta}_{\hat{\eps}}, \gD)| = \gO\left(\frac{(\bar{\eps}-\hat{\eps})^2}{n\tilde{\eps}^3}\right)
\end{equation}
where $\tilde{\eps} \in [\bar{\eps},\hat{\eps}]$, $n$ is the number of training samples. $\gO$ omitted the terms $p, \lambda, \delta, c, s$.
\end{thm}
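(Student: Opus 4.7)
The plan is to control the Taylor remainder $r(\bar{\eps}-\hat{\eps}) = \tfrac{1}{2}\frac{\partial^2 F(\hat{\vtheta}_{\tilde{\eps}}, \gD)}{\partial \eps^2}(\bar{\eps}-\hat{\eps})^2$ by obtaining a uniform bound $\partial^2 F / \partial \eps^2 = \gO(1/(n\eps^3))$ for $\eps$ in the relevant range. Applying the chain rule to $\eps \mapsto F(\hat{\vtheta}_\eps, \gD)$ gives
\begin{equation*}
\frac{\partial^2 F(\hat{\vtheta}_\eps, \gD)}{\partial \eps^2} = \left(\frac{\partial \hat{\vtheta}_\eps}{\partial \eps}\right)^{\!T} \nabla^2_\vtheta F(\hat{\vtheta}_\eps, \gD)\, \frac{\partial \hat{\vtheta}_\eps}{\partial \eps} + \left(\nabla_\vtheta F(\hat{\vtheta}_\eps, \gD)\right)^{\!T} \frac{\partial^2 \hat{\vtheta}_\eps}{\partial \eps^2}.
\end{equation*}
The closing remark before the theorem lets me transfer all the bounds on $\ell$ and $L$ to $F$, so $\|\nabla_\vtheta F\| \le c$ and $\|\nabla^2_\vtheta F\| \le \lambda$; what remains is to bound the two derivatives of $\hat{\vtheta}_\eps$.

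The first-order derivative comes straight out of Theorem~\ref{thm-1}: $\partial \hat{\vtheta}_\eps/\partial \eps = -\tfrac{1}{n}\mW_\eps^{-1}(\rvb'_\eps + \Delta'_\eps \hat{\vtheta}_\eps)$. Convexity of $\ell$ gives $\nabla^2 L \succeq 0$, hence $\mW_\eps \succeq \tfrac{1}{n}(\Lambda + \Delta_\eps)\mI$, so that $\|\mW_\eps^{-1}\| \le n/(\Lambda + \Delta_\eps)$. Together with $\Delta_\eps = 2\lambda/\eps$, $\Delta'_\eps = -2\lambda/\eps^2$, and the explicit reparameterised form of $\rvb'_\eps$ in~(\ref{partial-b}), this furnishes a polynomial-in-$1/\eps$ bound on $\partial \hat{\vtheta}_\eps/\partial \eps$.

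For the second-order derivative, I differentiate the expression from Theorem~\ref{thm-1} once more, using the standard matrix identity $\partial \mW_\eps^{-1}/\partial \eps = -\mW_\eps^{-1}(\partial \mW_\eps/\partial \eps)\mW_\eps^{-1}$, to obtain
\begin{equation*}
\frac{\partial^2 \hat{\vtheta}_\eps}{\partial \eps^2} = -\frac{1}{n}\left[\frac{\partial \mW_\eps^{-1}}{\partial \eps}\bigl(\rvb'_\eps + \Delta'_\eps \hat{\vtheta}_\eps\bigr) + \mW_\eps^{-1}\Bigl(\rvb''_\eps + \Delta''_\eps\hat{\vtheta}_\eps + \Delta'_\eps \frac{\partial \hat{\vtheta}_\eps}{\partial \eps}\Bigr)\right].
\end{equation*}
The factor $\partial \mW_\eps/\partial \eps$ contains the third-order term $\nabla^3_\vtheta L(\hat{\vtheta}_\eps,\gD)\cdot\partial\hat{\vtheta}_\eps/\partial \eps$ plus $\tfrac{1}{n}\Delta'_\eps \mI$, and this is exactly where the hypothesis $\|\nabla^3_\vtheta \ell\| \le s$ is invoked. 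The remaining piece $\rvb''_\eps$ is produced by a second application of the reparameterisation trick to the expression in~(\ref{partial-b}), which is a direct calculation yielding a term of order $1/\eps^3$ times a standard Gaussian.

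Collecting the bounds and identifying the dominant contribution --- which is the $\mW_\eps^{-1}\rvb''_\eps$ piece, carrying the highest negative power of $\eps$ --- yields $\partial^2 F/\partial \eps^2 = \gO(1/(n\eps^3))$, and plugging $\tilde{\eps}$ into the Taylor remainder gives the stated bound. The main obstacle is purely bookkeeping: every term in the expansion of $\partial^2 \hat{\vtheta}_\eps/\partial \eps^2$ has its own scaling in $\eps$, $n$, $\lambda$, $c$, $s$, and care is required for the contribution from $\partial \mW_\eps^{-1}/\partial \eps$, which involves $\|\mW_\eps^{-1}\|^2$ and is therefore sensitive to whether $\Lambda$ or $\Delta_\eps$ dominates in $\mW_\eps$; one must verify in each regime that this term does not exceed the claimed $1/(n\eps^3)$ rate and so is absorbed into the $\gO$.
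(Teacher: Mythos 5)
Your skeleton is the same as the paper's: expand $\partial^2 F/\partial\eps^2$ by the chain rule into a quadratic form in $\partial\hat{\vtheta}_{\eps}/\partial\eps$ plus a term against $\partial^2\hat{\vtheta}_{\eps}/\partial\eps^2$; differentiate the formula of Theorem~\ref{thm-1} once more using $\partial\mW_{\eps}^{-1}/\partial\eps = -\mW_{\eps}^{-1}(\partial\mW_{\eps}/\partial\eps)\mW_{\eps}^{-1}$, which is exactly where $\nabla^3_{\vtheta}L$ and hence the constant $s$ enter; obtain $\rvb''_{\eps}=\gO\bigl(\sqrt{\log(1/\delta)}/\eps^3\bigr)$ and $\Delta''_{\eps}=\gO(\lambda/\eps^3)$ by direct differentiation of (\ref{partial-b}) and of $\Delta_{\eps}=2\lambda/\eps$; and identify $-\frac{1}{n}\mW_{\eps}^{-1}(\rvb''_{\eps}+\Delta''_{\eps}\hat{\vtheta}_{\eps})$ as the dominant contribution. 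All of that matches the paper's proof step for step.

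The one place you diverge is the bound on $\|\mW_{\eps}^{-1}\|$, and as written that step does not deliver the stated rate. Your bound $\mW_{\eps}\succeq\frac{1}{n}(\Lambda+\Delta_{\eps})\mI$, hence $\|\mW_{\eps}^{-1}\|\le n/(\Lambda+\Delta_{\eps})$, is the rigorous consequence of mere convexity of $\ell$, but it is $\gO(n)$. Feeding it into the dominant term $\frac{1}{n}\,\|\mW_{\eps}^{-1}\|\,\|\rvb''_{\eps}\|\,\|\nabla F\|$ gives $\gO(1/\tilde{\eps}^3)$ (or $\gO(1/\tilde{\eps}^2)$ when $\Delta_{\eps}$ dominates $\Lambda$) with no factor $1/n$, so you land a full factor of $n$ short of the claimed $\gO\bigl((\bar{\eps}-\hat{\eps})^2/(n\tilde{\eps}^3)\bigr)$. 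The paper instead uses $\|\mW_{\eps}^{-1}\|=\gO\bigl(1/(\lambda+\Lambda/n+2\lambda/(n\eps))\bigr)=\gO(1/\lambda)$, an $n$-independent bound, and it is precisely this that produces the $1/n$ in (\ref{eq-bound}). Be aware that the paper gets this by formally inverting its upper bound on $\|\mW_{\eps}\|$, which is only legitimate if the smallest eigenvalue of $\nabla^2 L$ is bounded below by a constant of order $\lambda$ (an $n$-independent strong-convexity modulus), not merely if the Hessian norm is bounded above. So to complete your argument at the stated rate you must either add such a lower-eigenvalue assumption and use it in place of your convexity bound, or accept that your route only proves the weaker $n$-free bound.
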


\begin{proof}
Let's try expanding $\frac{\partial^2 F(\hat{\vtheta}_{\tilde{\eps}}, \gD)}{\partial \eps^2}$ for the first time:
\begin{equation}
\label{proof-1-1}
\resizebox{.95\linewidth}{!}{$
    \displaystyle
\begin{aligned}
\frac{\partial^2 F(\hat{\vtheta}_{\tilde{\eps}}, \gD)}{\partial \eps^2} 
&= \frac{\partial \left( \frac{\partial F(\hat{\vtheta}_{\tilde{\eps}}, \gD)}{\partial \eps} \right)}{\partial \eps} \\
&= \frac{\partial \left( \left( \frac{\partial F(\hat{\vtheta}_{\tilde{\eps}}, \gD)}{\partial \vtheta} \right) ^T \frac{\partial \hat{\vtheta}_{\tilde{\eps}}}{\partial \eps} \right)}{\partial \eps} \\
&= \left( \frac{\partial^2 F(\hat{\vtheta}_{\tilde{\eps}}, \gD)}{\partial \vtheta \partial \eps} \right)^T \frac{\partial \hat{\vtheta}_{\tilde{\eps}}}{\partial \eps} + \left( \frac{\partial F(\hat{\vtheta}_{\tilde{\eps}}, \gD)}{\partial \vtheta} \right)^T \frac{\partial^2 \hat{\vtheta}_{\tilde{\eps}}}{\partial \eps^2} \\
&= \left( \frac{\partial \hat{\vtheta}_{\tilde{\eps}}}{\partial \eps} \right) ^T \left( \nabla^2 F(\hat{\vtheta}_{\tilde{\eps}}, \gD) \right) \frac{\partial \hat{\vtheta}_{\tilde{\eps}}}{\partial \eps} + \left( \nabla F(\hat{\vtheta}_{\tilde{\eps}}, \gD) \right)^T \frac{\partial^2 \hat{\vtheta}_{\tilde{\eps}}}{\partial \eps^2}.
\end{aligned}
$}
\end{equation}

To go on, we need to expand $\frac{\partial^2 \hat{\vtheta}_{\tilde{\eps}}}{\partial \eps^2}$, which is as follows:
\begin{equation}
\label{proof-1-2}
\resizebox{.95\linewidth}{!}{$
    \displaystyle
\begin{aligned}
\frac{\partial^2 \hat{\vtheta}_{\tilde{\eps}}}{\partial \eps^2}
&= \frac{\partial \left( \frac{\partial \hat{\vtheta}_{\tilde{\eps}}}{\partial \eps} \right)}{\partial \eps} \\
&= \frac{\partial \left(- \frac{1}{n}\mW_{\tilde{\eps}}^{-1} \left( \rvb'_{\tilde{\eps}} + \Delta'_{\tilde{\eps}} \hat{\vtheta}_{\tilde{\eps}} \right) \right)}{\partial \eps} \\
&= -\frac{1}{n} \frac{\partial \left(\mW_{\tilde{\eps}}^{-1} \right)}{\partial \eps} \left(\rvb'_{\tilde{\eps}} + \Delta'_{\tilde{\eps}} \hat{\vtheta}_{\tilde{\eps}} \right) - \frac{1}{n} \mW_{\tilde{\eps}}^{-1}\frac{\partial \left(\rvb'_{\tilde{\eps}} + \Delta'_{\tilde{\eps}} \hat{\vtheta}_{\tilde{\eps}} \right)}{\partial \eps} \\
&= \mW_{\tilde{\eps}}^{-1} \frac{\partial \mW_{\tilde{\eps}}}{\partial \eps} \left( -\frac{1}{n} \right) \mW_{\tilde{\eps}}^{-1} \left( \rvb'_{\tilde{\eps}} + \Delta'_{\tilde{\eps}}\hat{\vtheta}_{\tilde{\eps}} \right) - \frac{1}{n} \mW_{\tilde{\eps}}^{-1} \left( \frac{\partial (\rvb'_{\tilde{\eps}})}{\partial \eps} + \frac{\partial (\Delta'_{\tilde{\eps}}) }{\partial \eps} \hat{\vtheta}_{\tilde{\eps}} + \Delta'_{\tilde{\eps}}\frac{\partial \hat{\vtheta}_{\tilde{\eps}}}{\partial \eps} \right) \\
&= \mW_{\tilde{\eps}}^{-1} \left( \frac{\partial \left( \nabla^2 L(\hat{\vtheta}_{\tilde{\eps}}, \gD) \right)}{\partial \eps} + \frac{1}{n} \Delta'_{\tilde{\eps}}\mI \right) \frac{\partial \hat{\vtheta}_{\tilde{\eps}}}{\partial \eps} - \frac{1}{n} \mW_{\tilde{\eps}}^{-1} \left(\rvb''_{\tilde{\eps}} + \Delta''_{\tilde{\eps}} \hat{\vtheta}_{\tilde{\eps}} + \Delta'_{\tilde{\eps}} \frac{\partial \hat{\vtheta}_{\tilde{\eps}}}{\partial \eps} \right) \\
&= \mW_{\tilde{\eps}}^{-1} \left( \nabla^3 L(\hat{\vtheta}_{\tilde{\eps}}, \gD) \right) \left( \frac{\partial \hat{\vtheta}_{\tilde{\eps}}}{\partial \eps} \right) \left( \frac{\partial \hat{\vtheta}_{\tilde{\eps}}}{\partial \eps} \right) - \frac{1}{n} \mW_{\tilde{\eps}}^{-1} \left( \rvb''_{\tilde{\eps}} + \Delta''_{\tilde{\eps}} \hat{\vtheta}_{\tilde{\eps}} \right), 
\end{aligned}
$}
\end{equation}
where $ \rvb''_{\tilde{\eps}} =  \frac{\partial (\rvb'_{\tilde{\eps}})}{\partial \eps}$, $ \Delta''_{\tilde{\eps}} =  \frac{\partial (\Delta'_{\tilde{\eps}})}{\partial \eps}$.


Now, let's bound the term $||\mW_{\tilde{\eps}}^{-1}||$ first. By the assumptions in Theorem~\ref{thm-1}, we have
\begin{equation}
\begin{aligned}
||\mW_{\tilde{\eps}}||
&=||\nabla^2 L(\vtheta, \gD)|| + \frac{\Lambda}{n} + \frac{1}{n}||\Delta_{\eps}|| \\
&=\frac{1}{n}\sum_{i=1}^n||\nabla^2 \ell(\vtheta, \vz_i)|| + \frac{\Lambda}{n} + \frac{2\lambda}{n\eps} \\
&= \gO \left( \lambda + \frac{\Lambda}{n} + \frac{2\lambda}{n\eps} \right).
\end{aligned}
\end{equation}

Thus, $||\mW_{\tilde{\eps}}^{-1}|| = \gO \left( \frac{1}{\lambda + \frac{\Lambda}{n} + \frac{2\lambda}{n\eps}} \right)$. 
Since it is usally that $\eps \gg \frac{1}{n}$, we obtain the bound of $\mW_{\tilde{\eps}}^{-1}$ to be $\gO \left( \frac{1}{\lambda} \right)$.

As $\Delta_{\eps} = \frac{2\lambda}{\eps}$, we have that $\Delta'_{\tilde{\eps}} = \gO \left( \frac{\lambda}{\eps^2} \right)$ and $\Delta''_{\tilde{\eps}} = \gO \left( \frac{\lambda}{\eps^3} \right)$.

Then we bound $\rvb'_{\tilde{\eps}}$ and $\rvb''_{\tilde{\eps}}$. 
We find $\log \frac{2}{\delta} + \eps$ and $\log \frac{2}{\delta}$ are of the same order of magnitude as large $\eps$ values are rarely used. Thus we have that $\rvb'_{\tilde{\eps}} = \gO \left( \frac{\sqrt{\log (\frac{1}{\delta})}}{\eps^2} \right)$ acoording to Equation (\ref{partial-b}). Similarly, we have that $\rvb''_{\tilde{\eps}} = \gO \left( \frac{\sqrt{\log (\frac{1}{\delta})}}{\eps^3} \right)$.

Note that $\frac{\partial \hat{\vtheta}_{\eps}}{\partial \eps}$ can be easily calculated by Equation (\ref{eq-3}), $||\vtheta||$ is usually assumed to be at most $\sqrt{p}$ via normalization ($p$ is the dimensionality of data, which is not considered in our paper), $||\nabla L(\hat{\vtheta}_{\tilde{\eps}}, \gD)|| \le c$, $||\nabla^2 L(\hat{\vtheta}_{\tilde{\eps}}, \gD)|| \le \lambda$ and $||\nabla^3 L(\hat{\vtheta}_{\tilde{\eps}}, \gD)|| \le s$, we can obtain our error bound of $r(\bar{\eps}-\hat{\eps}) = \frac{\partial^2 L(\hat{\vtheta}_{\tilde{\eps}}, \gD)}{\partial \eps^2}(\bar{\eps}-\hat{\eps})^2$ as follows:
\[\gO \left( \frac{(\bar{\eps}-\hat{\eps})^2}{n\tilde{\eps}^3} \right),\]
where $\gO$ omitted the terms $p, \lambda, \delta, c, s$.
\end{proof}

To bound the Taylor remainder, we assume the loss function has bounded third derivative. For most commonly used twice-differentiable loss functions, this assumption is easy to be satisfied. For example, the third derivative of quadratic loss is zero, any order derivative of logistic regression is bounded, etc.. 

Theorem~\ref{thm-bound} shows that our error bound is proportional to the difference between $\bar{\eps}$ and $\hat{\eps}$, and inversely proportional to sample number $n$. That is, the more samples we used for training and the smaller the change in $\eps$, the higher estimation accuracy we could obtain. Moreover, the magnitude of $\tilde{\eps}$ (i.e. the magnitude of $\bar{\eps}$ and $\hat{\eps}$) is also an important factor affecting estimation accuracy. Considering the general cases, where $\bar{\eps}$ and $\hat{\eps}$ are of the same magnitude, our bound would approach to $\gO\left(\frac{1}{n\tilde{\eps}}\right)$, which is fairly  small when $\tilde{\eps}$ is of general magnitude (i.e. $10^{-2} \sim 10^1$). However, if $\bar{\eps}$ and $\hat{\eps}$ are of different magnitudes, especially when one is of small magnitude, the error may be relatively large. For example, when $\bar{\eps} = 0.01, \hat{\eps}=10$, we have $(\bar{\eps}-\hat{\eps})^2  \approx 100$ and $\tilde{\eps} \in [0.01, 10]$, then the error would be between $\gO(\frac{10^{-1}}{n})$ and $\gO(\frac{10^8}{n})$. Thus, to obtain small error, we should ensure that $\bar{\eps}$ and $\hat{\eps}$ are of the same magnitude when using our approximation approach.

\section{Practical Use Case of Our Approach}
\label{sec-choose}

By revealing insights about how $\eps$ affects the model utility in differentially private ERM, our approach would be very useful in the practical application. In this section, we will show a typical use case about how our approach guides users to choose an appropriate $\eps$ for private learning in practice.



The question of how to choose $\eps$ is a typical application scenario, which has existed since differential privacy was first proposed~\cite{lee2011how,hsu2014differential}. Traditional approaches to differential privacy assume a fixed privacy requirement $\eps$ for the user and attempt to maximize the utility of model to the privacy constraint~\cite{naldi2015differential,hsu2014differential,8511827}. However, existing approaches share a common precondition that the user is capable to perceive and judge the effects of a privacy parameter. Considering that the privacy is hard to be quantitative variable, and the definition of differential privacy offers little insight into how this should be done, it may be a little harsh to require users to have such ability.

Different from the existing approaches,  we provide a more efficient and accurate approach to addressing this issue, based on the conclusions of the previous analysis. Conceptually, by re-arranging the terms in (\ref{eq-utility}), we have:
\begin{equation}
\label{eq-implement}
\hat{\eps} \approx \frac{F(\hat{\vtheta}_{\hat{\eps}}, \gD) - F(\hat{\vtheta}_{\bar{\eps}}, \gD)}{\frac{\partial F(\hat{\vtheta}_{\bar{\eps}}, \gD)}{\partial \eps}} + \bar{\eps}.
\end{equation}
Taking $F(\hat{\vtheta}_{\hat{\eps}}, \gD)$ as the expected utility we want to satisfy, then $\hat{\eps}$ would be the approximation of the $\eps$ we are looking for (i.e. the minimum $\eps$ achieves $F(\hat{\vtheta}_{\hat{\eps}}, \gD)$). Thus, once we obtain $F(\hat{\vtheta}_{\bar{\eps}}, \gD)$ and $\frac{\partial F(\hat{\vtheta}_{\bar{\eps}}, \gD)}{\partial \eps}$ by training a model $\hat{\vtheta}_{\bar{\eps}}$ with $\bar{\eps}$, $\hat{\eps}$ can be directly calculated without repeated trainings. Algorithm~\ref{alg-choose} gives a brief outline of this process.

\begin{algorithm}[pt]
	\caption{Choosing $\eps$ with our approximation approach}
	\begin{algorithmic}[1]
	\Require Dataset $\gD = \{\vz_1, \vz_2,\dots,\vz_n\}$, loss function $\ell(\vtheta, \vz_i)$, measuring privacy parameter $\bar{\eps}$, failure probability $\delta$, loss function $\ell(\vtheta, \vz_i)$, expected utility $\hat{F}$.
	\State Train a private model $\hat{\vtheta}_{\bar{\eps}}$ on $\gD$ by Theorem~\ref{objective}.
	\State Compute $\frac{\partial F(\hat{\vtheta}_{\bar{\eps}}, \gD)}{\partial \eps}$ by Theorem~\ref{thm-1}
	\Ensure $\hat{\eps}=\frac{\hat{F}-F(\hat{\vtheta}_{\bar{\eps}}, \gD)}{\frac{\partial F(\hat{\vtheta}_{\bar{\eps}}, \gD)}{\partial \eps}} + \bar{\eps}$
	\end{algorithmic}
\label{alg-choose}
\end{algorithm}


Our approach mainly has two advantages for choosing $\eps$. Compared with utility theorem, our approach can find nearly optimal $\eps$ that is very close to the true minimum value consistent with utility requirement as we analyzed in Section~\ref{sec-error-ana}. And, compared with empirical search, our approach is much more efficient as we only need to train the algorithm on the full dataset once. This feature will play a great advantage in case large-scale learning tasks.



\section{Evaluation}
\subsection{Evaluation Setup}
In this section, we will empirically analyze the performance of our approach. We implement the objective perturbation mechanism of~\cite{kifer2012private} based on the open source released by~\cite{iyengar2019towards}. 
Our evaluation considers the loss functions for two commonly used models: Logistic Regression (LR) and Huber SVM (SVM), which are defined as follows:
\begin{equation}
\begin{split}
\label{eq-lr}
\ell_{LR}(\vz) &= \log(1+e^{-\vz}), \\
\ell_{SVM}(\vz) &= 
\begin{cases}
0, & \mbox{if } z > 1 + h, \\
\frac{1}{4h}(1 + h - z)^2,  & \mbox{if } |1-z| \le h, \\
1-z, & \mbox{if} z<1-h,
\end{cases}
\end{split}
\end{equation}
both of which are twice-differentiable. The term $h$ in SVM is set to be $0.1$ by default. The hyperparameter $\Lambda$ of regularizer is tuned by training a non-private model on each dataset and set to be $10^{-2}$ for best performance. We use stochastic gradient descent algorithm to minimize Equation (\ref{eq-bg1}), of which the iterations and learning rate are set to be $100$ and $0.01$. The $\eps$ values we choose for training private models are between $0$ and $10$, which is very commonly used in related researches.

The Adult~\cite{Dua:2019}, Kddcup99~\cite{kddcup} and Gisette~\cite{guyon2004result} datasets are used in our experiments. 
The details of datasets is shown in Table~\ref{tb-dataset}. Due to the random noise addition, all experiments are repeated ten times and the average results are reported. The upper bound of $||\nabla \ell (\vtheta, \vz_i)||$ and eigenvalues of $\nabla^2 \ell(\vtheta, \vz_i)$ is set to be $2\sqrt{p}$ and $p$ due to~\cite{kifer2012private}, where $p$ denotes the dimensionality of dataset.

\begin{table}
\centering
	\begin{threeparttable}
	\centering
	\caption{Datasets used in our evaluations}
	\label{tb-dataset}
		\begin{tabular}{cccc}
		\toprule
		Dataset&\#Samples&\#Dim.&\#Classes\\
		\midrule
		Adult&$45,220$&$104$&$2$\\
		KDDCup99&$70,000$&$114$&$2$\\
		Gisette&$10,000$&$5,000$&$2$\\
		\bottomrule
		\end{tabular}
	\end{threeparttable}
\end{table}

\subsection{Experiment Design}
We totally set up three experiments. The first experiment is to verify the effectiveness of our approach. For each dataset, we randomly select $10,000$ samples and use them to train private models by the original objective perturbation mechanism according to each $\eps$ value, and use part of them as the measuring points to estimate empirical losses of other $\eps$ values according to Equation (\ref{eq-utility}), then compare estimation results with the real results. Due to our expectation, the gap between real results and calculated results should be very small. 

The second experiment is to analyze the impact of the selection of measuring points on the estimation results. We use each $\eps$ value as measuring point to estimate the empirical losses of all other $\eps$ values, and calculate the average error of estimated results. We hope that some rules can be found, which can be used as the guidance for selecting measuring point when using our approach.

The third experiment is to evaluate the effect of sample number on estimation error. Due to the limited sample number of Gisette, we only use Adult and Kddcup99 in this experiment. The number of samples is set from $1000$ to $20000$ for each dataset with fixed measuring points. According to our analysis in Section~\ref{sec-error-ana}, the estimation error should decrease with the increase of sample number.

\begin{figure*}
\centering
    \subfigure[Adult-LR]{
    \includegraphics[width=0.45\textwidth]{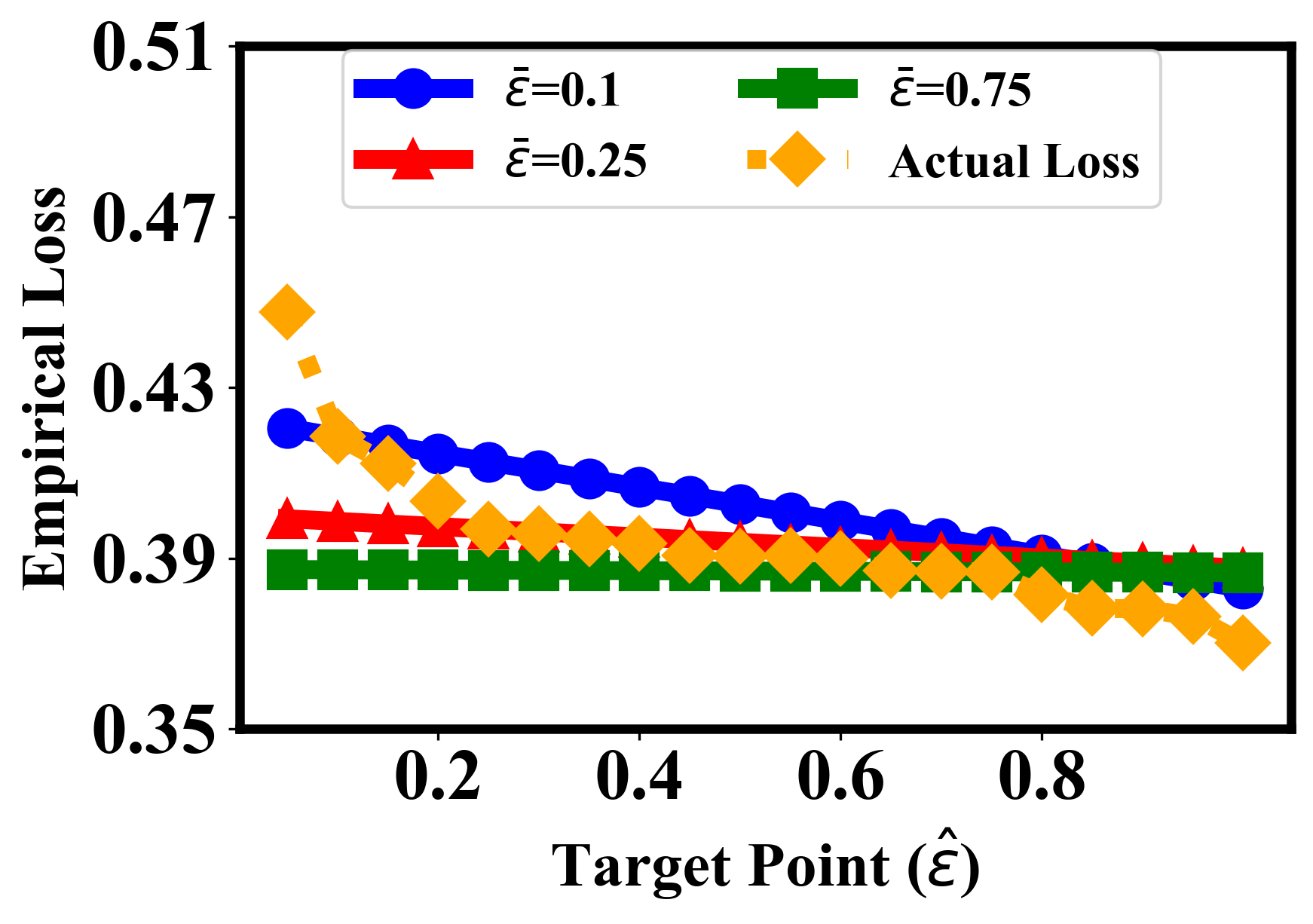}
    }
    \subfigure[Adult-SVM]{
    \includegraphics[width=0.45\textwidth]{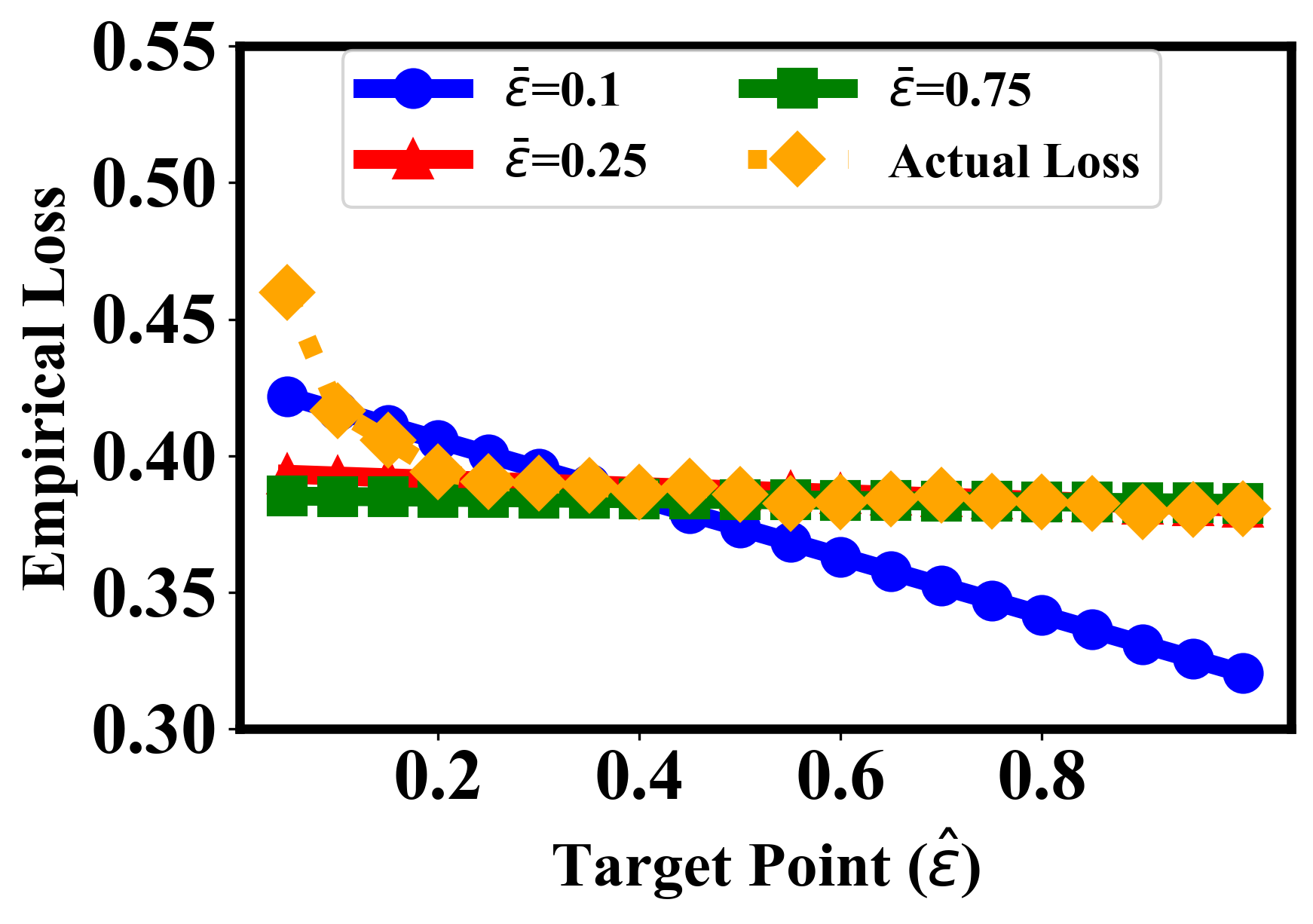}
    }
    \subfigure[Kddcup99-LR]{
    \includegraphics[width=0.45\textwidth]{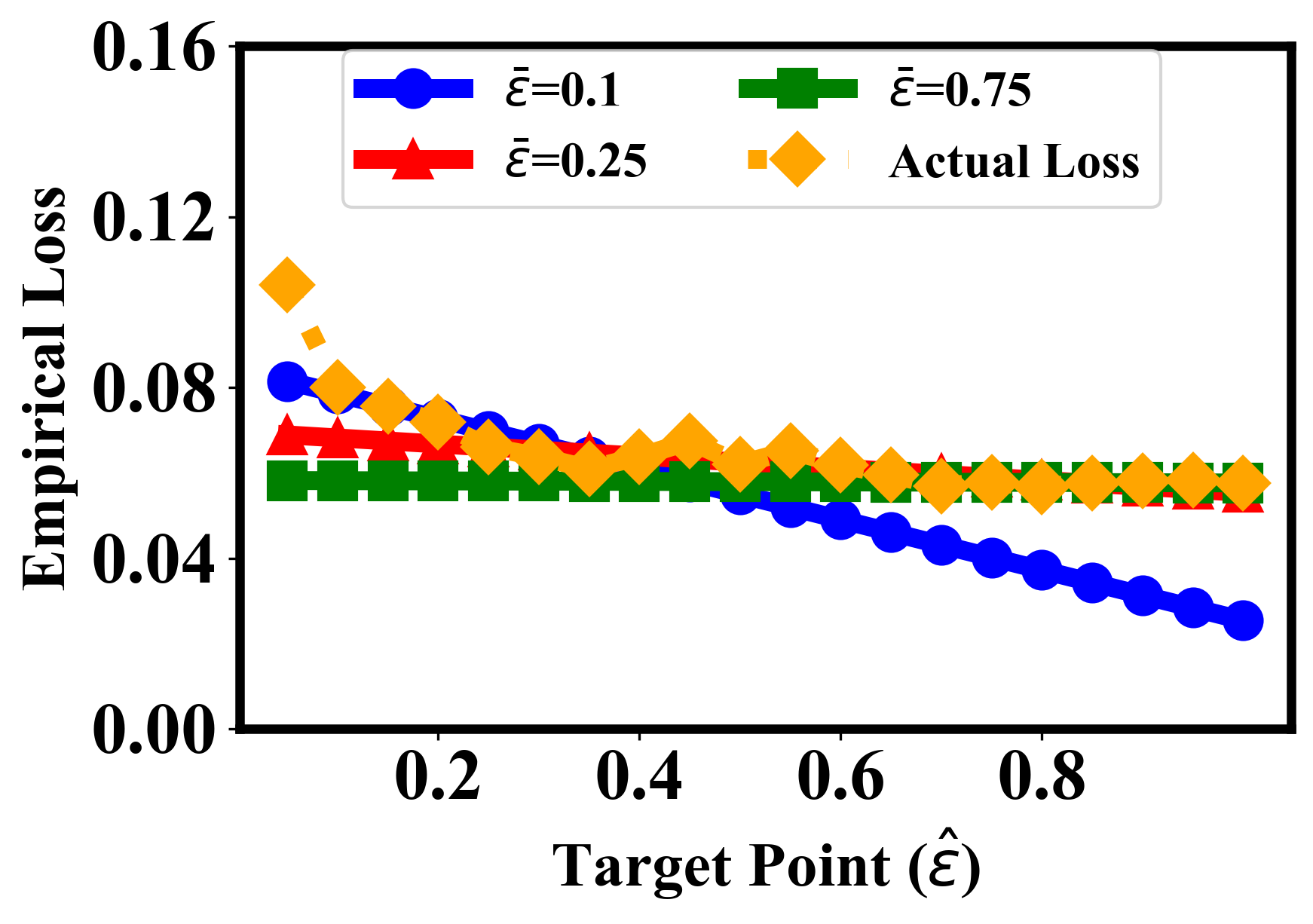}
    }
    \subfigure[Kddcup99-SVM]{
    \includegraphics[width=0.45\textwidth]{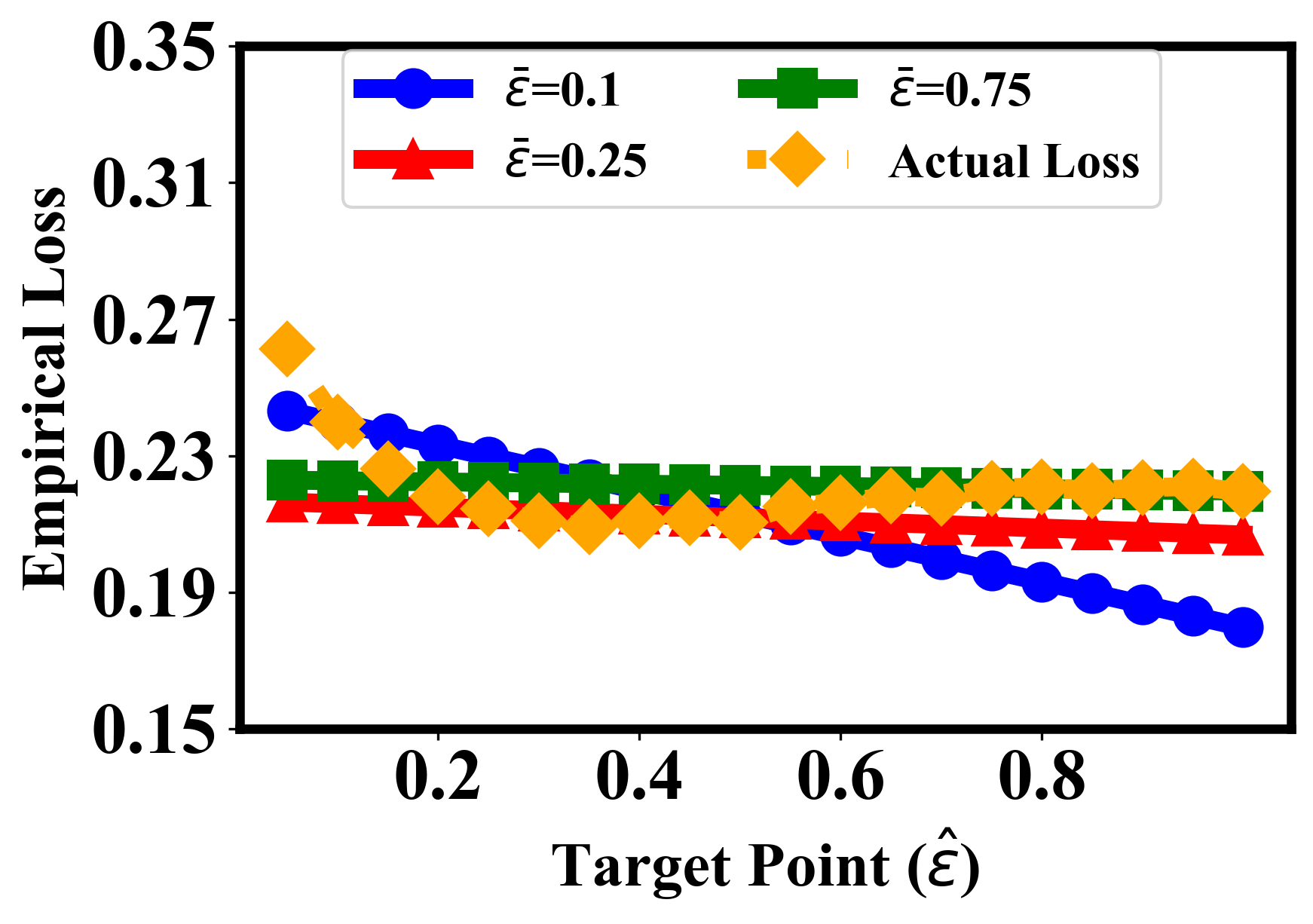}
    }
    \subfigure[Gisette-LR]{
    \includegraphics[width=0.45\textwidth]{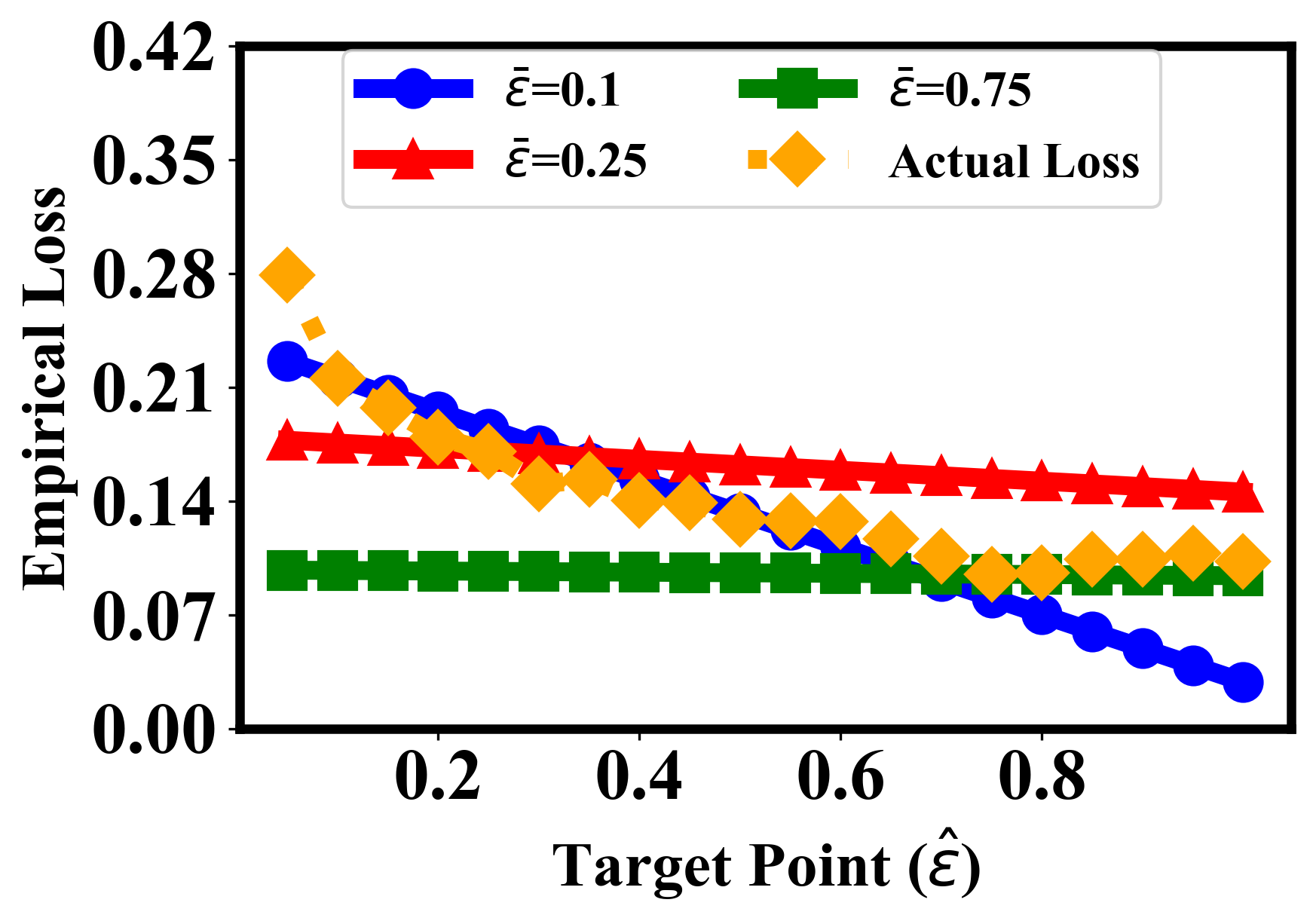}
    }
    \subfigure[Gisette-SVM]{
    \includegraphics[width=0.45\textwidth]{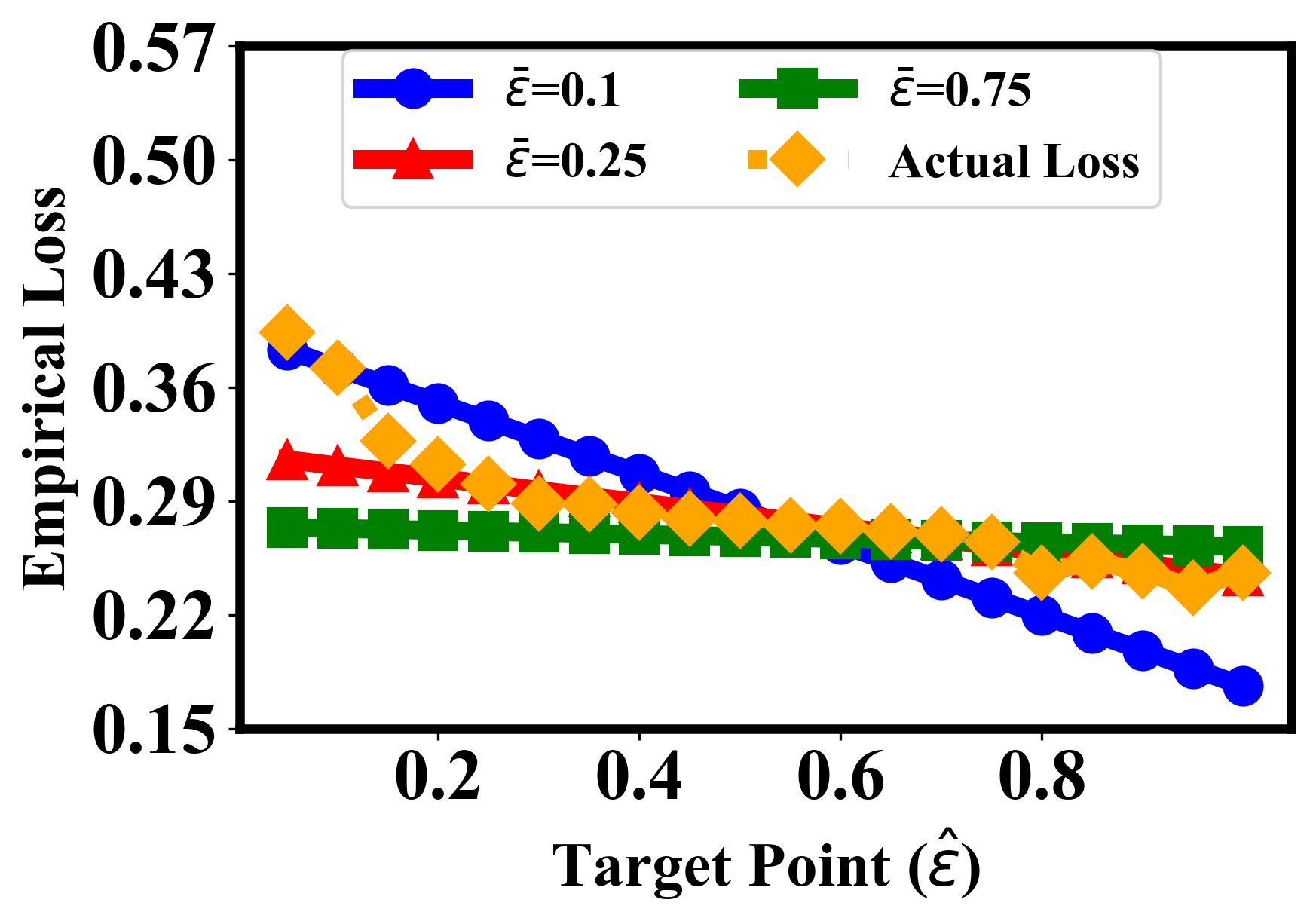}
    }
\caption{Performance of our approximation approach on datasets of Adult and Kddcup99 with logistic regression (LR) loss and Huber SVM (SVM) loss ($0 < \eps \le 1.0$).}
\label{fig-performance}
\end{figure*}

\begin{figure*}
\centering
    \subfigure[Adult-LR]{
    \includegraphics[width=0.45\textwidth]{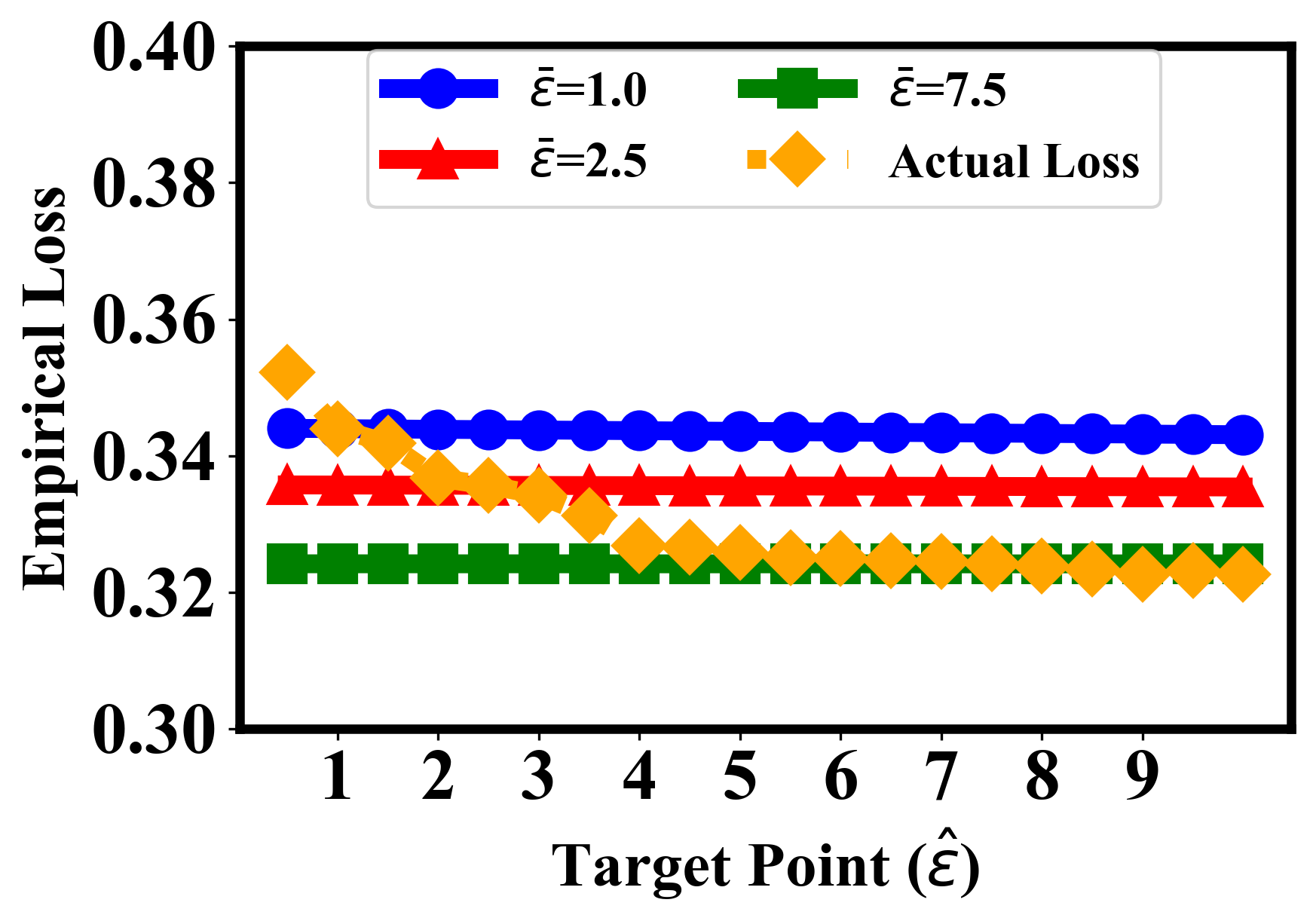}
    }
    \subfigure[Adult-SVM]{
    \includegraphics[width=0.45\textwidth]{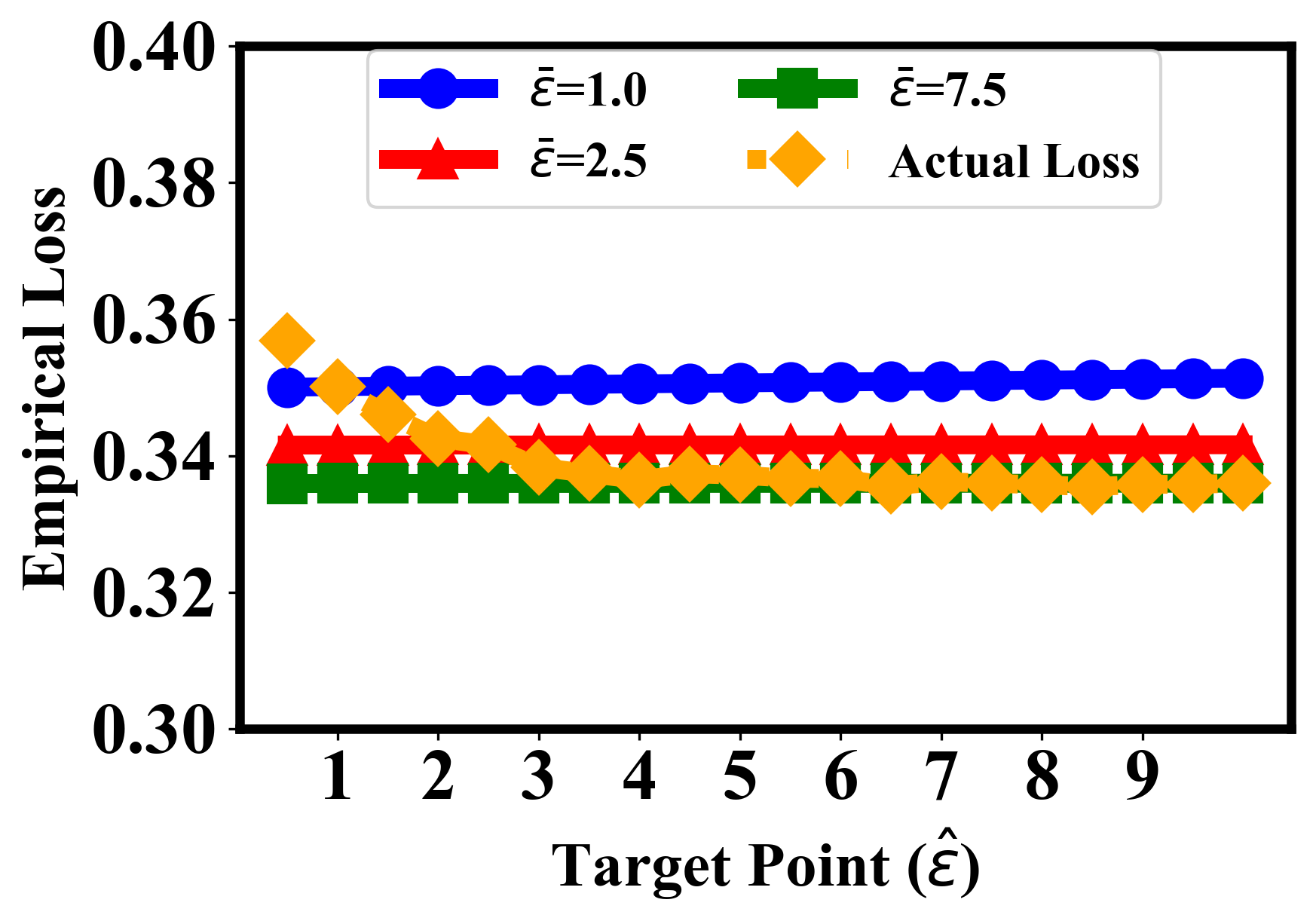}
    }
    \subfigure[Kddcup99-LR]{
    \includegraphics[width=0.45\textwidth]{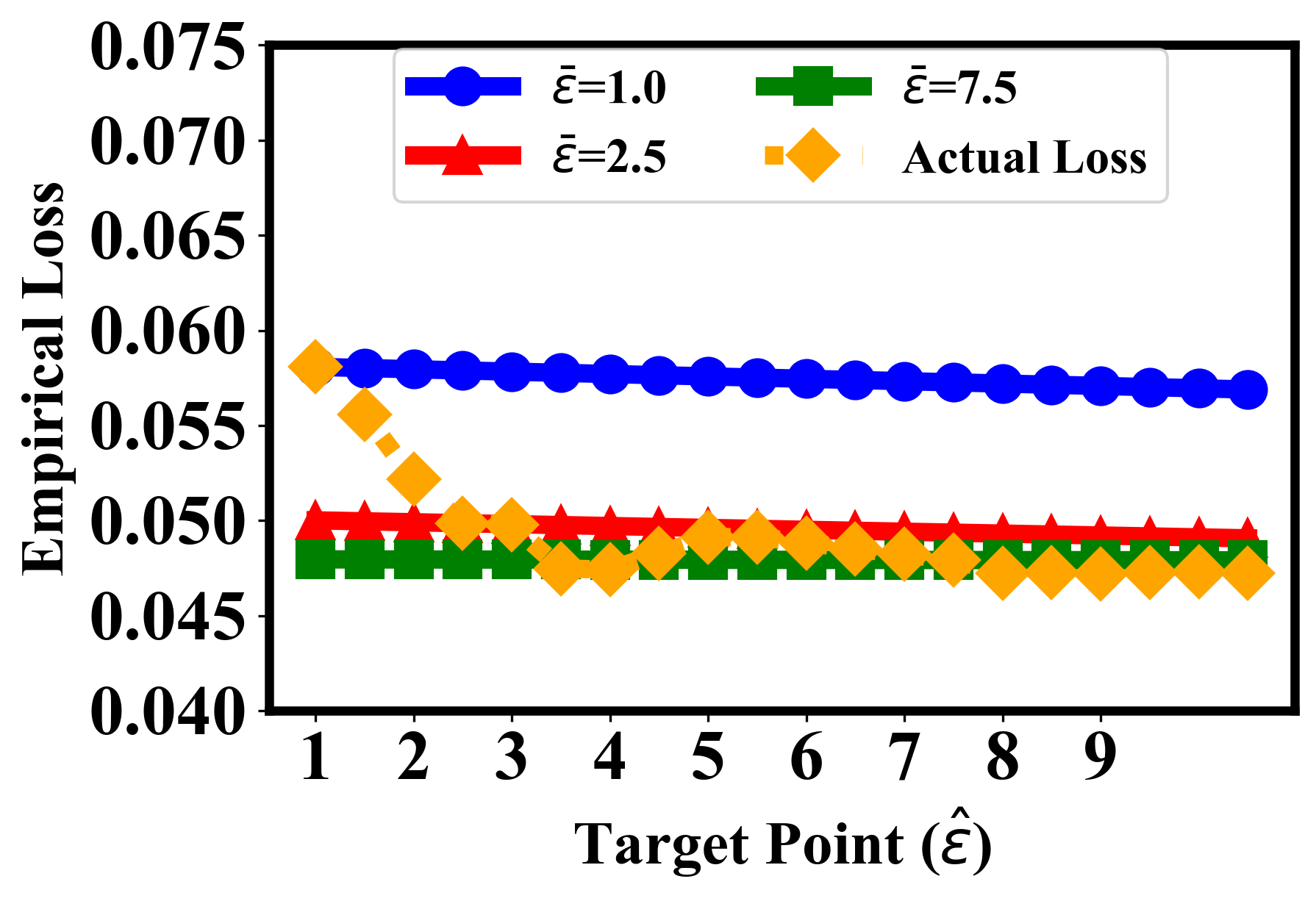}
    }
    \subfigure[Kddcup99-SVM]{
    \includegraphics[width=0.45\textwidth]{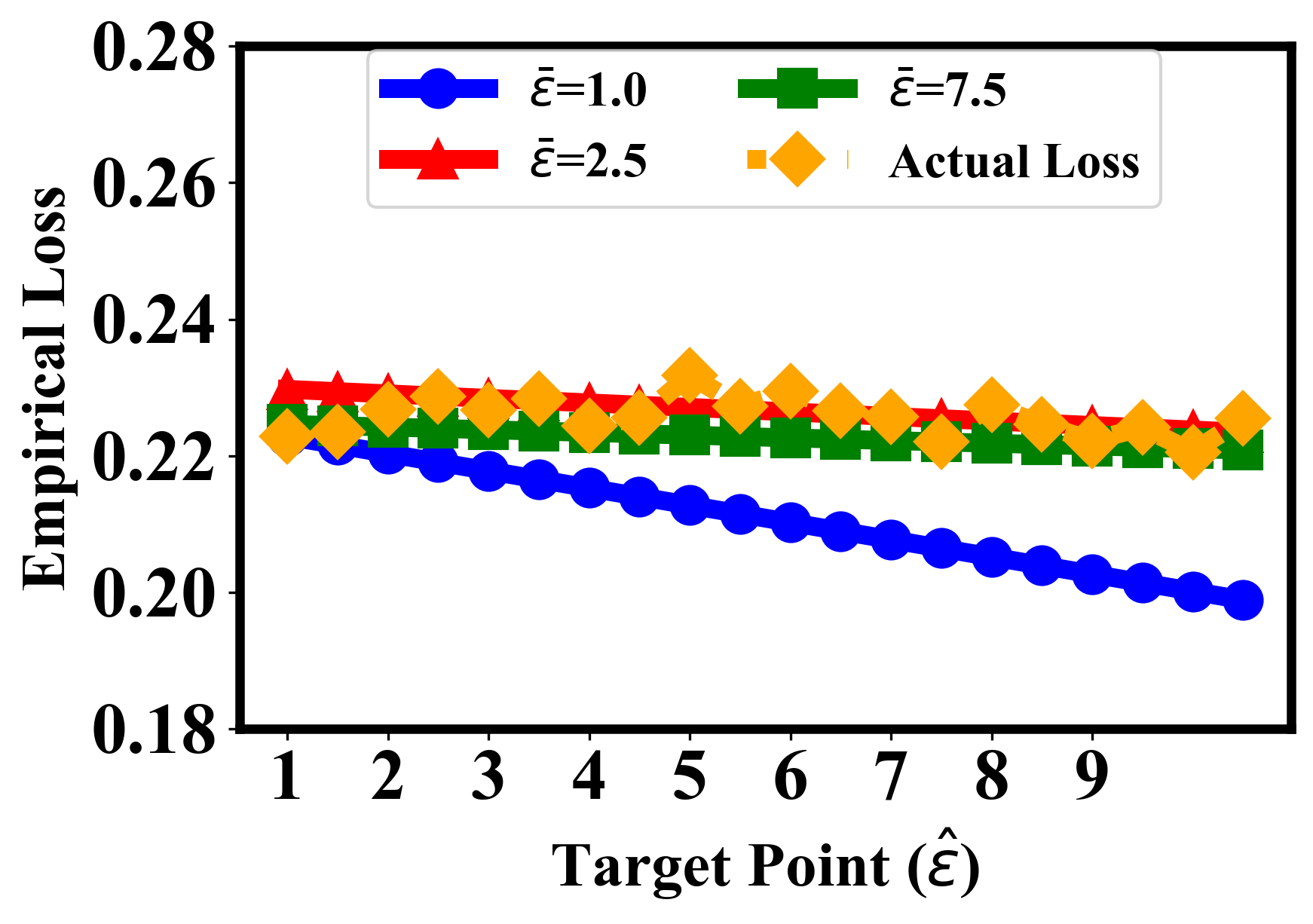}
    }
    \subfigure[Gisette-LR]{
    \includegraphics[width=0.45\textwidth]{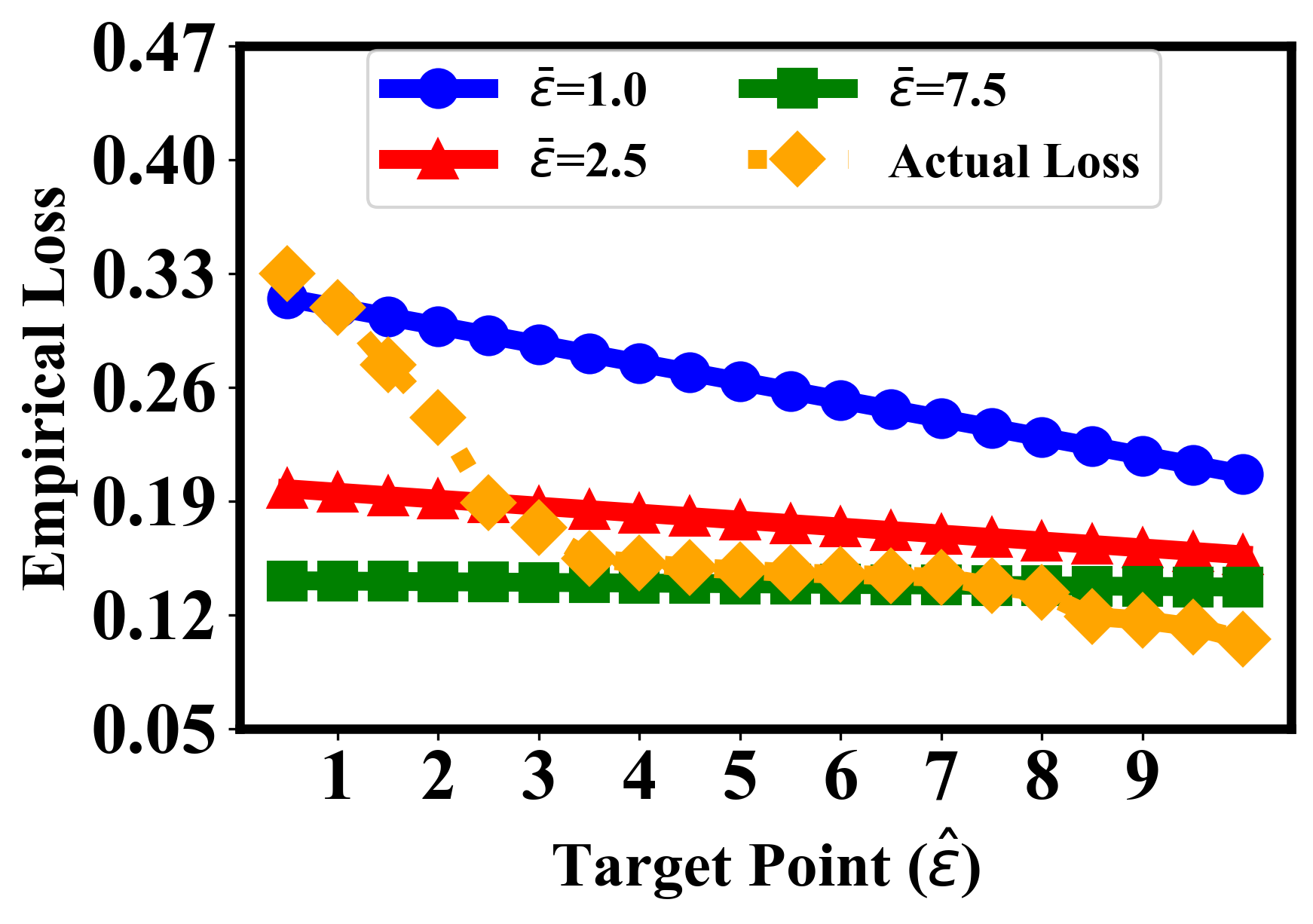}
    }
    \subfigure[Gisette-SVM]{
    \includegraphics[width=0.45\textwidth]{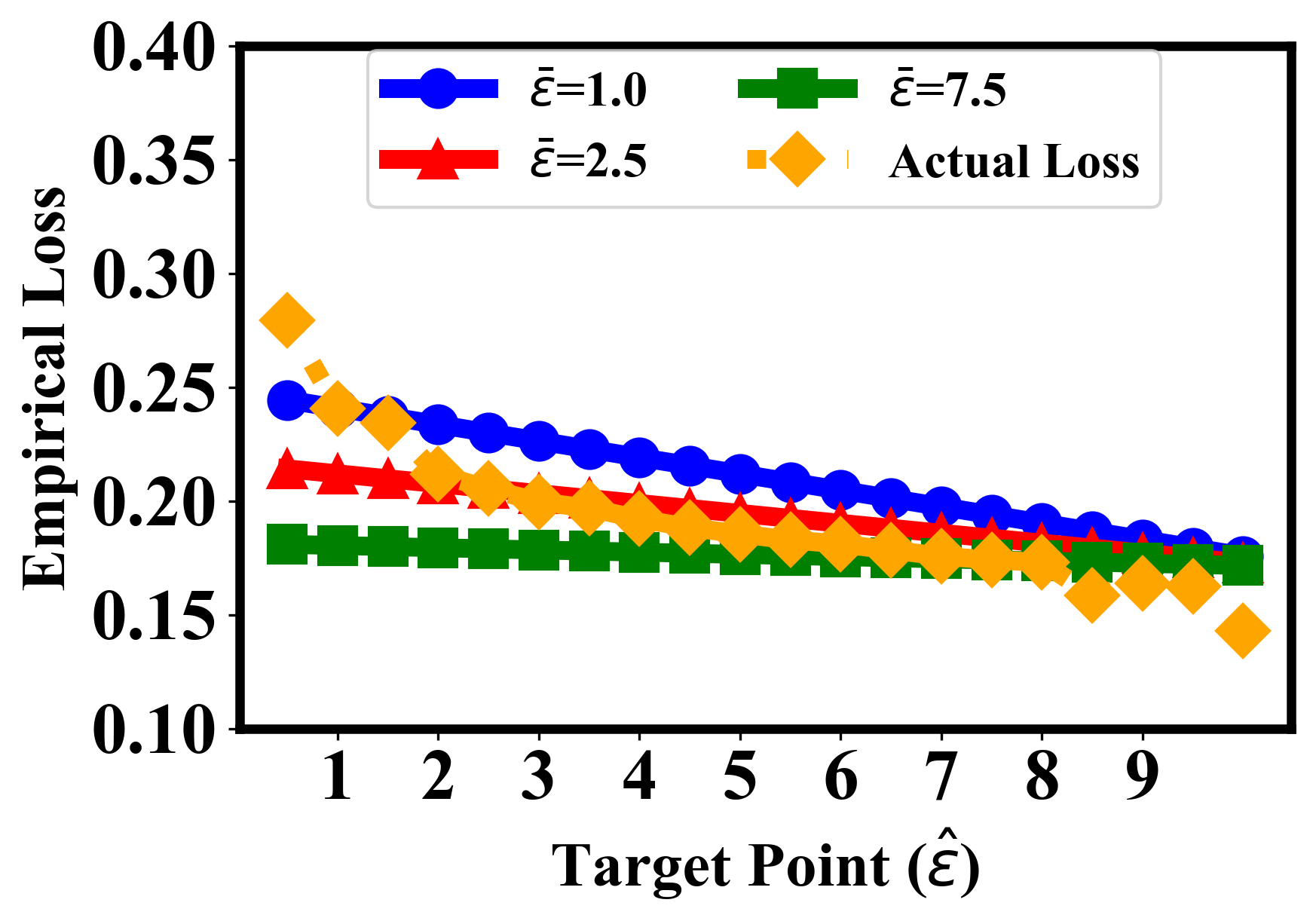}
    }
\caption{Performance of our approximation approach on datasets of Adult and Kddcup99 with logistic regression (LR) loss and Huber SVM (SVM) loss ($1.0 \le \eps \le 10$).}
\label{fig-performance-10}
\end{figure*}

\subsection{Evaluation Results and Analysis}
To make the results of the first experiment more clear, we divide $\eps$ into two domains due to the magnitude, which are $0 < \eps \le 1.0$ and $1.0 \le \eps \le 10$. For $0 < \eps \le 1.0$, we choose $\bar{\eps}=0.1, \bar{\eps}=0.25, \bar{\eps}=0.75$ as the measuring points to estimate the empirical loss at points varying from $\hat{\eps}=0.05$ to $\hat{\eps}=1.0$. For $1.0 \le \eps \le 10$, we choose $\bar{\eps}=1.0, \bar{\eps}=2.5, \bar{\eps}=7.5$ as the measuring points to estimate the empirical loss at points varying from $\hat{\eps}=1.0$ to $\hat{\eps}=10$. 

The experimental results of $0 < \eps \le 1.0$ are shown in Figure~\ref{fig-performance}, and the experimental results of $1.0 \le \eps \le 10$ are shown in Figure~\ref{fig-performance-10}. For both figures, the orange line representing the real empirical loss of each $\eps$ is called the real line, and the other lines representing the estimate results of empirical loss are called estimated lines. Each estimated line is drawn by the estimation results of the empirical loss of each $\hat{\eps}$ (target points). 
Obviously, from both Figure~\ref{fig-performance} and Figure~\ref{fig-performance-10}, it can be seen that the estimated lines basically fit the real line on each dataset (i.e. they are very close to each other). The results strongly confirm the effectiveness of our approach. Relatively speaking, the closer target point to measuring point, the higher estimation accuracy. This is basically consistent with our error analysis, that our error bound is proportional to the difference between $\bar{\eps}$ and $\hat{\eps}$. In addition, it can be seen that estimated lines are straight with different slope. The reason for this is that after training at the measuring point $\bar{\eps}$, the coefficients in (\ref{eq-thm1}) are determined, which makes a linear relationship between the value of target $\hat{\eps}$ and its estimated empirical loss. Compared with Figure~\ref{fig-performance} and Figure~\ref{fig-performance-10}, we can also see that when measuring and estimating on larger $\eps$, the error will be further reduced (note that the scale span of ordinate in Figure~\ref{fig-performance-10} is less than that in Figure~\ref{fig-performance}), which is in good agreement with the error analysis in Section~\ref{sec-error-ana}.

\begin{table*}[p]
\centering
\resizebox{0.99\linewidth}{!}{$
	\begin{threeparttable}
	\caption{Average error of estimated loss for measuring point from $\bar{\eps}=0.05$ to $\bar{\eps} = 0.50$ (part-1).}
		\begin{tabular}{ccccccccccc}
		\toprule
		$\bar{\eps}$&$0.05$&$0.10$&$0.15$&$0.20$&$0.25$&$0.30$&$0.35$&$0.40$&$0.45$&$0.50$\\
		\midrule
		Adlut-LR&$0.01982$&$0.00823$&$0.00908$&$0.00408$&$\bf0.00003$&$0.00033$&$0.00004$&$0.00037$&$0.00293$&$0.00309$ \\
		Adlut-SVM&$0.08188$&$0.02052$&$0.00327$&$0.00315$&$0.00445$&$0.00367$&$0.00385$&$0.00558$&$\bf0.00298$&$0.00585$\\
		Kddcup-LR&$0.05884$&$0.02393$&$0.00930$&$0.00721$&$0.00763$&$0.00845$&$0.00960$&$0.00677$&$\bf0.00283$&$0.00778$\\
		Kddcup-SVM&$0.04735$&$0.00866$&$0.00798$&$0.00793$&$0.00851$&$0.01052$&$0.01143$&$0.00954$&$0.00848$&$0.00967$\\
		Gisette-LR&$0.18427$&$0.01482$&$0.01181$&$0.02125$&$0.01931$&$\bf0.00302$&$0.00810$&$0.00372$&$0.00394$&$0.01361$ \\
		Gisette-SVM&$0.16245$&$0.00571$&$0.02480$&$0.00822$&$0.00567$&$0.01023$&$\bf0.00550$&$0.00712$&$0.00823$&$0.00823$ \\
		\bottomrule
\label{part-1}
		\end{tabular}
	\end{threeparttable}
	$}
\end{table*}

\begin{table*}[htb]
\centering
\resizebox{0.99\linewidth}{!}{$
	\begin{threeparttable}
	\caption{Average error of estimated loss for measuring point from $\bar{\eps}=0.55$ to $\bar{\eps} = 1$ (part-2).}
		\begin{tabular}{ccccccccccc}
		\toprule
		$\bar{\eps}$&$0.55$&$0.60$&$0.65$&$0.70$&$0.75$&$0.80$&$0.85$&$0.90$&$0.95$&$1.00$\\
		\midrule
		Adlut-LR&$0.00339$&$0.00623$&$0.00643$&$0.00649$&$0.01169$&$0.01480$&$0.00283$&$0.01485$&$0.01675$&$0.02276$ \\
		Adlut-SVM&$0.00833$&$0.00711$&$0.00546$&$0.00770$&$0.00780$&$0.00840$&$0.00890$&$0.01142$&$0.01053$&$0.01020$ \\
		Kddcup-LR&$0.01186$&$0.01288$&$0.01200$&$0.01149$&$0.01147$&$0.01213$&$0.00442$&$0.00768$&$0.01010$&$0.01274$ \\
		Kddcup-SVM&$0.00112$&$0.00137$&$\bf0.00038$&$0.00134$&$0.00192$&$0.00250$&$0.00486$&$0.00363$&$0.00154$&$0.00200$ \\
		Gisette-LR&$0.04637$&$0.04476$&$0.03693$&$0.03683$&$0.03308$&$0.03828$&$0.01420$&$0.01423$&$0.02461$&$0.03560$ \\
		Gisette-SVM&$0.01745$&$0.03641$&$0.02911$&$0.03548$&$0.04540$&$0.03603$&$0.00961$&$0.00926$&$0.01188$&$0.01327$ \\
		\bottomrule
\label{part-3}
		\end{tabular}
	\end{threeparttable}
	$}
\end{table*}


For the second experiment, we focus on the analysis of $0 < \eps \le 1.0$. Since there is quite a bit of experimental results, we divide the large horizontal table into three small horizontal tables (which are Table~\ref{part-1}, Table~\ref{part-3}) to ensure that the results are clearly displayed in the page. For each dataset (each line), the result in bold represents the measuring $\bar{\eps}$ with minimum error. We can see that the minimum error is concentrated in the middle, and the maximum error is either at the smallest $\bar{\eps}$ (i.e. $\bar{\eps} = 0.05$) or at the largest $\bar{\eps}$ (i.e. $\bar{\eps} = 1.00$). The reason for this is that the average gap between intermediate value and other values is smaller, so the average error of them tend to be smaller. In addition, the smallest $\bar{\eps}$ tends to have larger average error. This is because that $\tilde{\eps}$ in Equation (\ref{eq-bound}) would be small when measuring $\bar{\eps}$ is small, which would lead to high estimated error. Thus, in application, it would be better to select the measuring point near the mean value of target points so as to reduce the estimation error.

\begin{figure*}
\centering
    \subfigure[Adult-LR]{
    \includegraphics[width=0.45\textwidth]{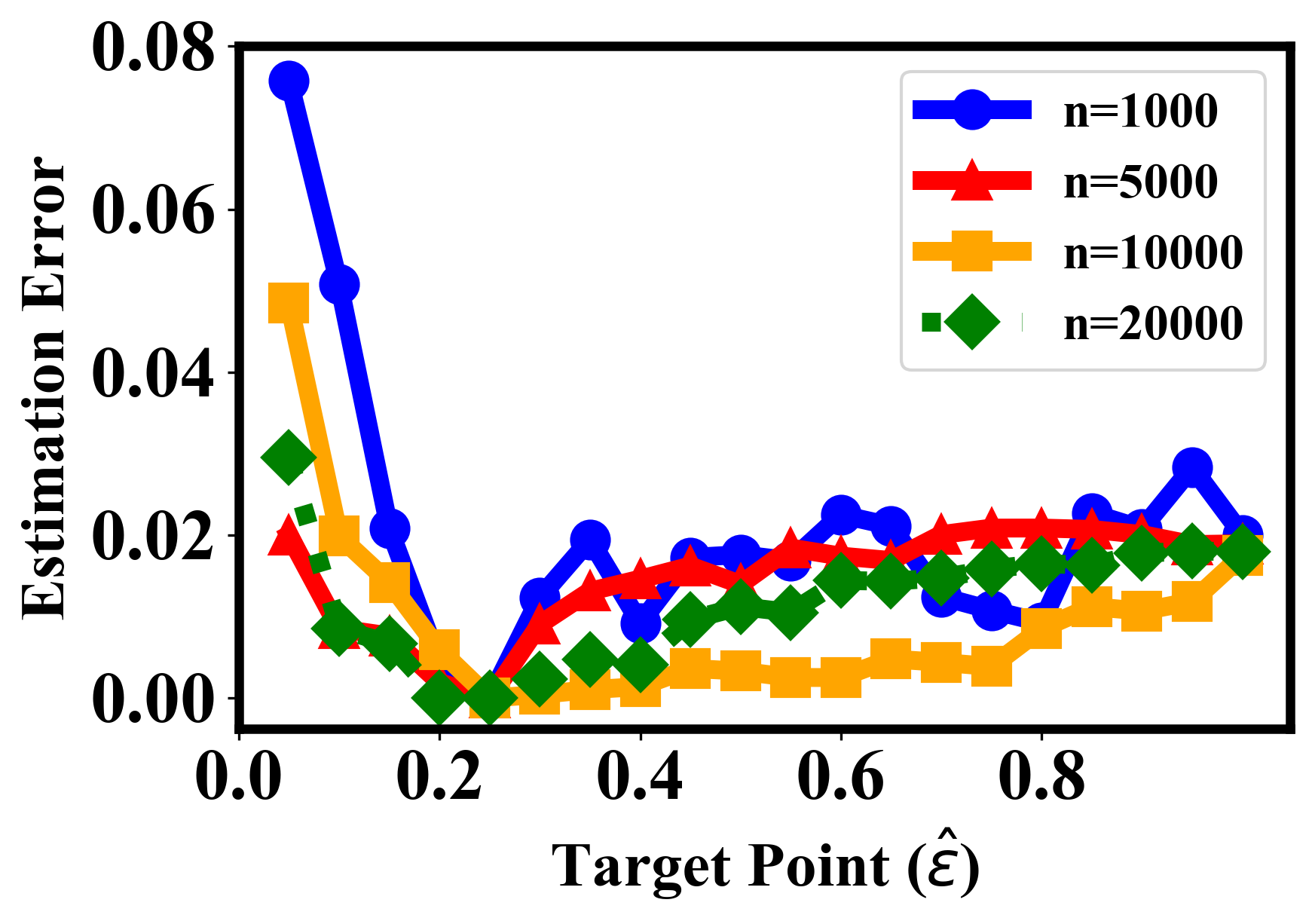}
    }
    \subfigure[Adult-SVM]{
    \includegraphics[width=0.45\textwidth]{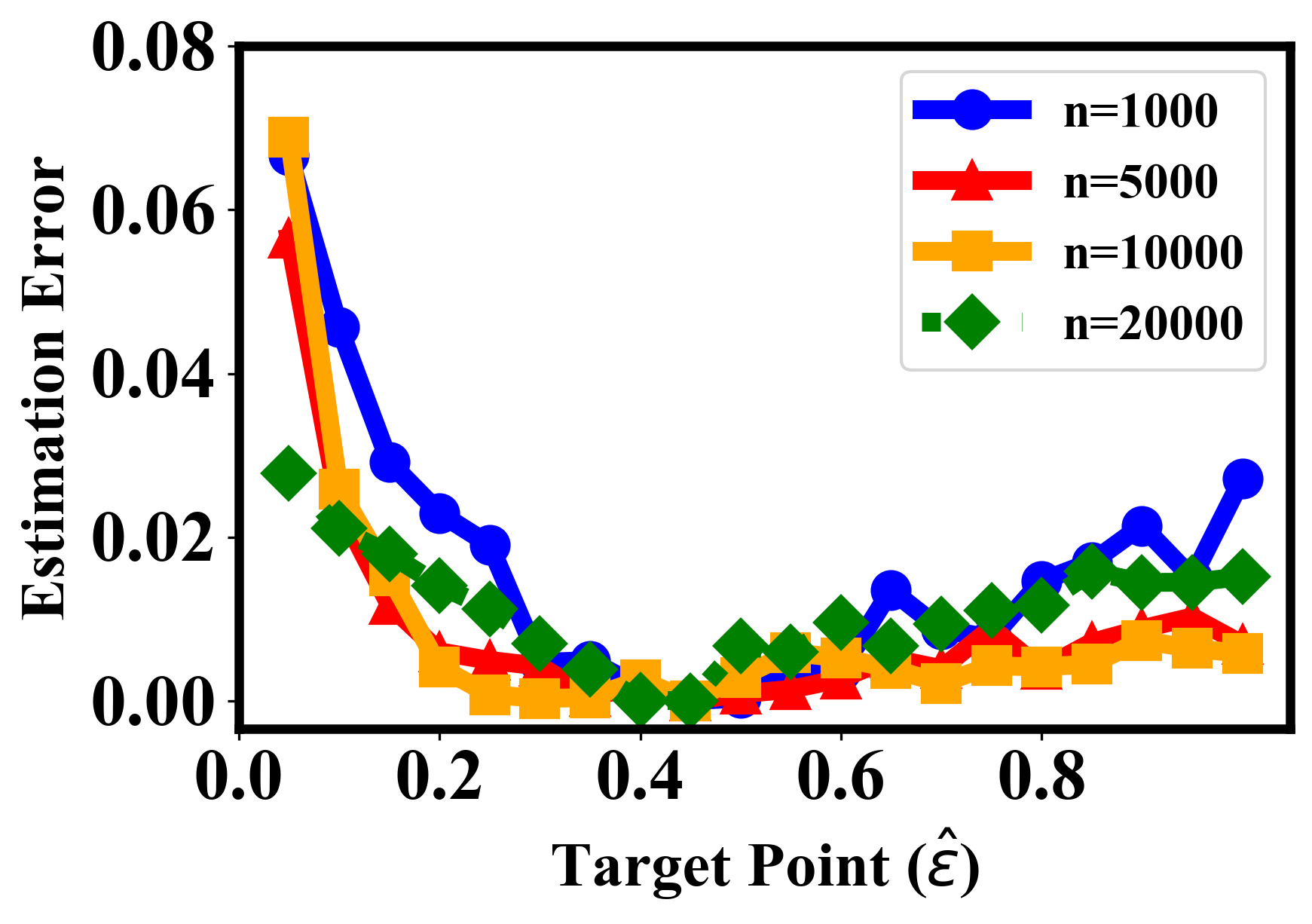}
    }
    \subfigure[Kddcup99-LR]{
    \includegraphics[width=0.45\textwidth]{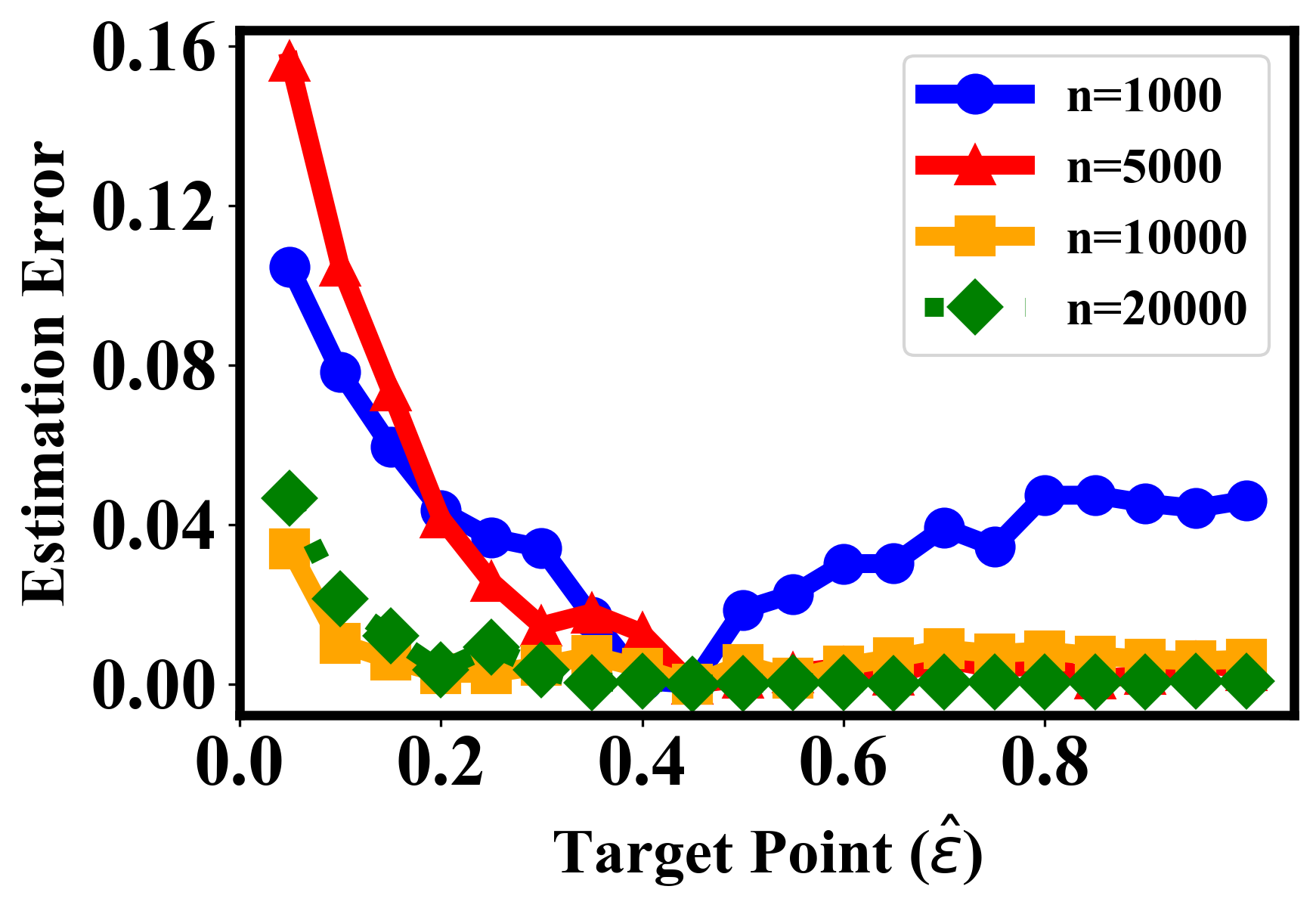}
    }
    \subfigure[Kddcup99-SVM]{
    \includegraphics[width=0.45\textwidth]{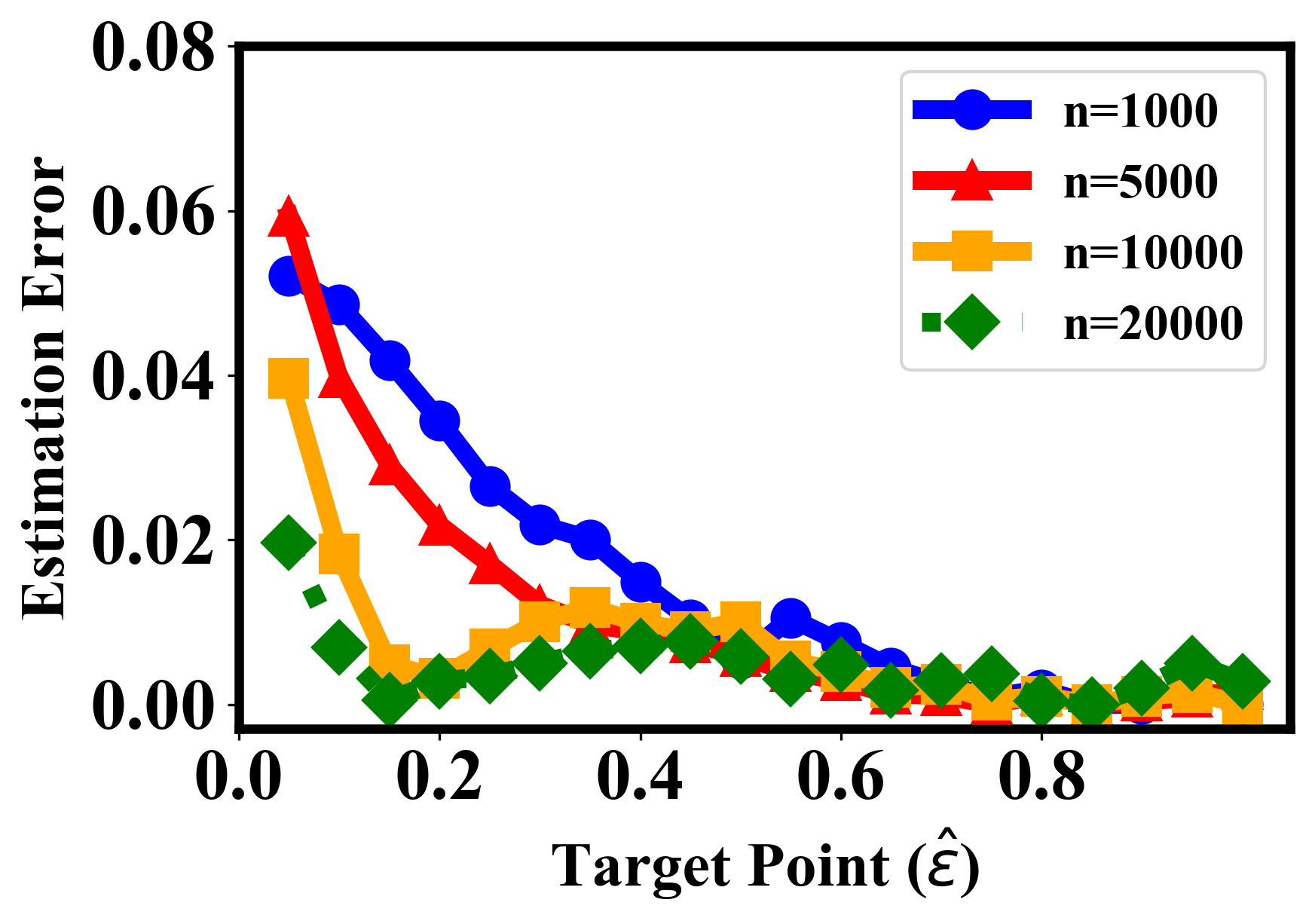}
    }
\caption{Affect of sample number on estimation error. The measuring point for each evaluation, from left to right, is set to be $\bar{\eps}=0.25, \bar{\eps}=0.45, \bar{\eps}=0.45, \bar{\eps}=0.85$, respectively, which achieves the minimum error in the second experiment.}
\label{fig-number}
\end{figure*}

For the third experiment, we fix the value of $\bar{\eps}$ for each dataset to be the one that achieves minimum error and vary the sample number  from $1000$ to $20000$ in the second experiment. 
The results are shown in Figure~\ref{fig-number}. Since the objects we compare are all estimation results, we directly use estimation error (instead of empirical loss we used in the first experiment) as the ordinate of each figure. Obviously, the estimation error decreases with the increase of sample number. This is consistent with our analysis in Section~\ref{sec-error-ana} as the error bound of our approach is inversely proportional to sample number $n$ due to Equation~(\ref{eq-bound}). One possible explanation for this is that more training samples can improve the accuracy of the trained model and reduce the empirical loss at measuring $\bar{\eps}$. From Figure~\ref{fig-number}, we can see our error analysis is confirmed again, that the closer to measuring $\bar{\eps}$, the lower the estimation error of target $\hat{\eps}$, and under the same distance, the estimation error of larger target $\hat{\eps}$ is less than that of smaller target $\hat{\eps}$.

\section{Related Works}

To overcome the conservatism of ``utility theorems'',~\cite{Ligett2017accuracy} proposed an empirical approximation method for utility analysis. In their method, they formalize the utility analysis as finding the minimum $\eps$ that meets a sufficient level of accuracy. They start from a very private $\eps$, then gradually subtract the noise until the expected accuracy is achieved.
The advantage of this method is that it focuses on the reality, not the theoretical worst-case. Therefore, their analysis is much closer to the real value than ``utility theorems''. However, an obvious drawback of this method is that it requires so many attempts of $\eps$ (noise subtraction), that would take a lot of computation and training time. This is often difficult to be accepted in practice, especially in large-scale learning tasks. Compared with~\cite{Ligett2017accuracy}, our approach can effectively overcome the shortcomings of the ``utility theorems'' and the method in~\cite{Ligett2017accuracy}. On the one hand, our approach is a data-dependent theory, thus the analysis tends to be consistent with the actual situation. On the other hand, instead of repeated training to approach the target $\eps$ step by step, our approach only needs one-round training to obtain the target $\eps$ directly, which makes it much more efficient than~\cite{Ligett2017accuracy}.

The accuracy-first differential privacy has also been widely studied in differentially private query answering systems. For example, GUPT~\cite{mohan2012gupt} offers a tool based on the sample-and-aggregate framework for differential privacy, which allows analysts to specify the target accuracy of the output, and compute privacy from it—or vice versa. PSI~\cite{marco2016psi} offers to the data analyst an interface for selecting either the level of accuracy that she wants to reach, or the level of privacy she wants to impose. APEx~\cite{ge2019apex} allows data analysts to pose adaptively chosen sequences of queries along with required accuracy bounds. DPella~\cite{lobo2020a} provides a programming framework for reasoning about privacy, accuracy, and their trade-offs. It uses taint analysis to detect probabilistic independence and derive tighter accuracy bounds using Chernoff bounds. Different from the above studies, our approach focuses on the empirical risk minimization problems and uses empirical error of the learning algorithms to pick the epsilon. Besides, our approach can be directly applied to any objective perturbed learning algorithms.

\section{Conclusion}
In this paper, we focus on the practical issue of differentially private machine learning: how to assess the impact of different $\eps$ values on utility of the well-trained model in advance. 
We formalize this problem as the following question: how $\eps$ exactly affects the utility. 
We find that most existing researches of differentially private machine learning refer to this issue. Usually, they define this analysis as ``utility analysis'' and conclusively give the utility bound based on ``utility theorem''. However, ``utility theorems'' often fail to provide meaningful assessments in practice as they tend to be worst-case bounds and would be too conservative as a practical reference. 

Different from previous researches, we address this issue through ``empirical'' theorem. By tracing the change of utility with modification of $\eps$ in objective perturbation mechanisms, we find that there is an established relationship between $\eps$ and utility. We reveal this relationship through a formulaic definition and then put forward a practical approach for estimating the utility of the trained model under an arbitrary value of $\eps$. 
Both theoretical analysis and experimental results on real-world datasets demonstrate the good estimation accuracy and broad applicability of our approximation approach. To the best of our knowledge, it is the first time to formalize such a relationship between $\eps$ and utility under practical cases. The new understanding of utility-privacy trade-off in differential privacy provided by our analysis opens a huge space for exploration in theoretical research and practical application. As providing algorithms with strong utility guarantees that also give privacy when possible becomes more and more accepted, we believe that our approach would have high practical value and can provide meaningful guidance for addressing practical issues.

In our future work, we wish to further improve the generality of our approximation approach, such as extend it to more general perturbation mechanisms (i.e. gradient perturbation) or give further analysis of our approximate approach for general cases (i.e. non-convex, non-convergent and non-differentiable cases).

\bibliographystyle{unsrtnat}
\bibliography{ms}  






\end{document}